\documentclass[conference]{IEEEtran}
\IEEEoverridecommandlockouts

\usepackage{cite}
\usepackage{amsmath,amssymb,amsfonts}
\usepackage{graphicx}
\usepackage{textcomp}
\usepackage{xcolor}
\def\BibTeX{{\rm B\kern-.05em{\sc i\kern-.025em b}\kern-.08em
    T\kern-.1667em\lower.7ex\hbox{E}\kern-.125emX}}

\usepackage[linesnumbered,ruled,vlined]{algorithm2e}
\usepackage{mathrsfs}
\usepackage{amsmath,amsfonts,amsthm}
\usepackage{bm}
\usepackage{colortbl}
\usepackage{cases}
\usepackage{bbding}
\usepackage{pifont}
\usepackage{subfigure}%1并排图片宏包
\usepackage{xcolor}
\usepackage{tabularx}
\usepackage{booktabs}
\usepackage{threeparttable}
\usepackage{graphicx}
\usepackage[export]{adjustbox}
\usepackage{hyperref}
\usepackage{epstopdf}
\usepackage{ulem}
\usepackage{hhline}
\usepackage[OT1]{fontenc}
\usepackage[noend]{algpseudocode}
\usepackage{longtable}
\usepackage{soul}
\usepackage{wasysym}
\usepackage{float} %设置图片浮动位置的宏包
\usepackage{tikz}
\newtheorem{theorem}{Theorem}
\newtheorem{lemma}{Lemma}

\newcommand{\circlednumber}[1]{%
  \tikz[baseline=(char.base)]{
    \node[shape=circle, draw=black, fill=white, text=black, inner sep=0.3pt, font=\small] (char) {\sffamily\bfseries #1};%
  }%
}

\definecolor{BrickRed}{RGB}{178,34,34}

%
%

%%
%% \BibTeX command to typeset BibTeX logo in the docs
%\AtBeginDocument{%
%  \providecommand\BibTeX{{%
%    Bib\TeX}}}
    
% \settopmatter{printacmref=false} % Removes citation information below abstract
% \renewcommand\footnotetextcopyrightpermission[1]{} % removes footnote with conference 

% \settopmatter{printfolios=true}

\makeatletter  
\newif\if@restonecol  
\makeatother

\usepackage[
	n,
	operators,
	advantage,
	sets,
	adversary,
	landau,
	probability,
	notions,	
	ff,
	mm,
	primitives,
	events,
	complexity,
	asymptotics,
	keys]{cryptocode}

% *** Do not adjust lengths that control margins, column widths, etc. ***
% *** Do not use packages that alter fonts (such as pslatex).         ***
% There should be no need to do such things with IEEEtran.cls V1.6 and later.
% (Unless specifically asked to do so by the journal or conference you plan
% to submit to, of course. )

% correct bad hyphenation here
% \hyphenation{op-tical net-works semi-conduc-tor}

\begin{document}

% paper title

\title{PIR-DSN: A Decentralized Storage Network Supporting Private Information Retrieval
\thanks{\textsuperscript{\Envelope} Corresponding author: Minghui Xu (\href{mailto:mhxu@sdu.edu.cn}{mhxu@sdu.edu.cn}).}
}

\author{
	\IEEEauthorblockN{Jiahao Zhang$^{\dag}$, Minghui~Xu$^{\dag}$\textsuperscript{\Envelope}, Hechuan Guo$^{\dag}$, Xiuzhen~Cheng$^{\dag}$}
	\IEEEauthorblockA{$^\dag$ School of Computer Science and Technology, Shandong University}
        %\IEEEauthorblockA{$^\P$ ETH Zurich}
        %\IEEEauthorblockA{$^\ddagger$ Department of Computer and Information Science, Indiana University-Purdue University Indianapolis}
        %\IEEEauthorblockA{$^*$ BaishanCloud}
	% \IEEEauthorblockA{Email: 
 %        \{\href{mailto:mhxu@sdu.edu.cn}{mhxu}, \href{mailto:xzcheng@sdu.edu.cn}{xzcheng}, \href{mailto:dxyu@sdu.edu.cn}{dxyu}\}@sdu.edu.cn, 
 %        \{\href{mailto:zjh@mail.sdu.edu.cn}{zjh}, \href{mailto:ghc@mail.sdu.edu.cn}{ghc}\}@mail.sdu.edu.cn, 
 %        \href{mailto:qinhu@iu.edu}{qinhu@iu.edu},
 %        \{\href{mailto:yujun.li@baishan.com}{yijun.li}, \href{mailto:yipu.wu@baishan.com}{yipu.wu}\}@baishan.com
 %        }
}

\maketitle

\begin{abstract}
Decentralized Storage Networks (DSNs) are emerging as a foundational infrastructure for Web 3.0, offering global peer-to-peer storage. However, a critical vulnerability persists: user privacy during file retrieval remains largely unaddressed, risking the exposure of sensitive information. To overcome this, we introduce PIR-DSN, the first DSN protocol to integrate Private Information Retrieval (PIR) for both single and multi-server settings. Our key innovations include a novel secure mapping method that transforms sparse file identifiers into compact integer indexes, enabling both public verifiability of file operations and efficient private retrieval. Furthermore, PIR-DSN guarantees Byzantine-robust private retrieval through file replication across multiple miners. We implement and rigorously evaluate PIR-DSN against three prominent industrial DSN systems. Experimental results demonstrate that PIR-DSN achieves comparable overhead for file upload and deletion. While PIR inherently introduces an additional computational cost leading to higher retrieval latency, PIR-DSN maintains comparable throughput. These findings underscore PIR-DSN's practical viability for privacy-sensitive applications within DSN environments.
\end{abstract}

% Note that keywords are not normally used for peerreview papers.
\begin{IEEEkeywords}
Decentralized Storage Network, User Privacy, Private Information Retrieval, Byzantine Fault Tolerance. 
\end{IEEEkeywords}

\section{Introduction}
\label{sec:introduction}
Decentralized Storage Networks (DSNs) are innovative systems that combine peer-to-peer storage with blockchain technology. What sets them apart from traditional systems is their proof-of-storage mechanism, which cryptographically verifies file storage and ensures data availability even within untrusted networks~\cite{dsnsok}. DSNs pool unused storage from providers worldwide, forming a fault-tolerant decentralized network. DSNs have made strides recently, boasting increased storage capacity~\cite{filedag,ecchain}, improved data security~\cite{bftdsn}, and enhanced privacy protection~\cite{FileDES}. They are quickly becoming essential infrastructure in Web3.0, powering the storage needs of NFTs~\cite{Opensea} and decentralized AI~\cite{Nuklai}.

Privacy is a persistent concern in DSNs, encompassing both data privacy and user privacy. Efforts have been made in data privacy, ensuring file contents are inaccessible to unauthorized parties. Many industrial DSNs, like Sia~\cite{sia}, Storj~\cite{storj}, and Swarm~\cite{swarm}, tackle data privacy by allowing clients to upload encrypted files. Solutions such as FileDES~\cite{FileDES} further enhance this with fine-grained data protection using proxy re-encryption and attribute-based encryption. 
However, user privacy, which aims to prevent the inference of sensitive user characteristics from their file access patterns (e.g., frequency) and related data (e.g., metadata, public descriptions), is still largely unaddressed. In DSNs, clients retrieve files from storage miners using file identifiers (FIDs). This process creates a vulnerability: storage miners can easily observe access frequencies and public descriptions through these FIDs. Observing file access patterns can inadvertently reveal users' habits, personal interests, or cultural affiliations~\cite{Popcorn}. This is especially critical since DSNs often serve as storage for content delivery networks (CDNs) like Saturn~\cite{saturn} and Storacha~\cite{storacha}, and video streaming networks like Livepeer~\cite{livepeer}. 

To bolster user privacy within DSNs, a critical need exists: clients must be able to retrieve publicly available files from storage miners without divulging their access patterns. This requirement finds a strong match in Private Information Retrieval (PIR)~\cite{pir}, which empowers clients to fetch data from servers without revealing what information they are requesting. PIR's mechanism, where clients send query vectors to servers, aligns well with DSN file retrieval, and its multi-server model complements the typical DSN setup where files are stored across multiple miners. Despite this strong conceptual fit, the integration of PIR into DSNs to protect user privacy has not yet been fully explored or implemented. Even with the promising alignment, we foresee challenges that would need to be addressed in such a system, particularly concerning the design of efficient index structures and mitigating malicious attacks. 

\subsubsection{The sparsity of FIDs hinders efficient private information retrieval} Designing PIR-DSN presents a primary challenge -- aligning the FIDs used in DSNs with the indexing schemes required by PIR protocols. Currently, two main approaches for PIR exist: index-based and keyword-based. With index-based PIR, a client privately retrieves a record by its integer index. The challenge here is that DSNs use hash-based FIDs, which are incredibly numerous and sparsely distributed (e.g., \(2^{256}\) for 256-bit hashes). Directly mapping these FIDs to consecutive integers would create an unmanageably large database, leading to an enormous computational burden for index-based PIR, as servers would need to perform $O(n)$ computations for query vectors of size $n$.

Keyword-based PIR is an alternative. ChalametPIR (ACM CCS’24)~\cite{celi2024call}, utilizes binary fuse filters to map \(n\) keywords compact integer indexes ranging from $1$ to $1.08n$. This allows for retrieving a record using its sparsely distributed keyword with three invocations of an underlying index-based PIR, namely FrodoPIR~\cite{frodopir}. However, this method comes with its drawbacks if applied in DSNs. ChalametPIR's mapping method is designed for static datasets, but DSNs are inherently dynamic as miners continually handle upload requests and store new files. Inserting the keywords of new files into a binary fuse filter may introduce collisions, forcing the miner to rebuild the filter with a larger size or new hash functions, resulting in large overhead. Consequently, achieving efficient and secure private retrieval in DSNs critically depends on unifying their index structures with those used in PIR.

\subsubsection{Malicious storage miners can undermine data availability} 
A core goal of DSNs is to ensure high data availability. However, malicious miners pose a big threat, as they are globally distributed and often weakly managed, enabling them to launch attacks that compromise both data integrity and availability~\cite{damgaard2024severe}. This vulnerability distinguishes DSNs from traditional cloud or P2P storage systems. Through their malicious actions, these miners can subtly manipulate clients into storing files within DSNs under their control, ultimately increasing their own profits.

Specifically, malicious miners can execute index manipulation attacks~\cite{prunster2022total,sridhar2023content}. Research~\cite{prunster2022total} has shown that even with modest resources (like a 4-core CPU and 16GB RAM), malicious miners can manipulate FIDs to misdirect retrieval requests to incorrect objects. When we integrate PIR into DSNs, clients not only need to locate the correct miners but also depend on accurate database indexes for successful file retrieval. Since these indexes are generated and maintained by the miners, they are highly vulnerable to manipulation. Consequently, when the client attempts to retrieve a file, it could end up with the wrong file or an empty one. A robust mechanism is essential to prevent these insidious index manipulation attacks throughout the DSNs' lifecycle.

In response, we propose PIR-DSN, a system that enables private file retrieval over DSNs. Our main contributions are summarized as follows:
\begin{itemize}
\item \textbf{Efficient and Verifiable FID-to-Index Mapping:} We introduce an efficient and verifiable method for mapping sparse FIDs to compact integer indexes, a critical step for efficiently employing PIR in dynamic DSNs. Our approach leverages an Asynchronous Cryptographic Accumulator (ACA) to maintain these mappings and generate verifiable proofs for all establishment and deletion operations. This ensures public verifiability of file operations and enables efficient calculation of PIR retrieval results, directly addressing challenges posed by FID sparsity and mitigating malicious index manipulation attacks.

\item \textbf{Private Single-Miner PIR Based DSN (SPIR-DSN):} Our first protocol, SPIR-DSN, empowers clients to privately retrieve files from a single storage miner. By integrating our secure mapping method with a single-server index-based PIR protocol, clients can derive correct file indexes from mapping establishment proofs. This allows clients, in a single invocation of SPIR-DSN, to either successfully retrieve the correct file or detect if a malicious miner attempts index manipulation attacks, all without revealing the specific file being retrieved.

\item \textbf{Robust Multi-Miner PIR Based DSN (MPIR-DSN):} Our second protocol, MPIR-DSN, ensures robust and private file retrieval in multi-miner DSN settings where files are replicated for enhanced security and resilience. We integrate our secure mapping method with a Byzantine fault-tolerant state machine replication (BFT-SMR) protocol to maintain consistent database states across multiple storage miners. This consistency allows us to apply a Byzantine-robust PIR scheme, enabling clients to privately retrieve correct files and detect miners returning incorrect query answers within a single MPIR-DSN invocation.

\item \textbf{Analysis and Evaluation:} We analyze the security and efficiency properties of PIR-DSN from a theoretical perspective. We also implement and rigorously evaluate PIR-DSN against three prominent industrial DSN systems: Filecoin, Sia, and Storj. Experimental results demonstrate that PIR-DSN achieves comparable performance, even with enhanced privacy protection guarantees. 

\end{itemize}

The PIR-DSN is available at: \url{https://github.com/saika2k/PIR-DSN-code.git}

\begin{table*}[!t]
	\begin{center}
	\begin{threeparttable}
		\caption{Comparison of PIR-DSN with related schemes}
            \label{table:features}
            \tabcolsep=0.25cm
			\begin{tabular}{l c c c c c c c c}
				\toprule[1pt]
                & \multicolumn{4}{c}{\textbf{Retrieval}}
                & \multicolumn{4}{c}{\textbf{Index}} \\
                \cmidrule(lr){2-5} \cmidrule(lr){6-9}
                   & \multicolumn{1}{c}{\textbf{\begin{tabular}[c]{@{}c@{}}Servers\\ involved\end{tabular}}}
                   & \multicolumn{1}{c}{\textbf{\begin{tabular}[c]{@{}c@{}}User\\ privacy\end{tabular}}} 
                   & \multicolumn{1}{c}{\textbf{\begin{tabular}[c]{@{}c@{}}Robustness\end{tabular}}}
                   & \multicolumn{1}{c}{\textbf{\begin{tabular}[c]{@{}c@{}}Per-Server\\ Computation\end{tabular}}}
                   & \multicolumn{1}{c}{\textbf{\begin{tabular}[c]{@{}c@{}}Public\\ verifiability\end{tabular}}}
                   & \multicolumn{1}{c}{\textbf{\begin{tabular}[c]{@{}c@{}}Verification\\ complexity\end{tabular}}}
                   & \multicolumn{1}{c}{\textbf{\begin{tabular}[c]{@{}c@{}}Method\end{tabular}}} 
                   & \multicolumn{1}{c}{\textbf{\begin{tabular}[c]{@{}c@{}}Value\\ space\end{tabular}}}\\
                \midrule[0.3pt]
                FileDES\cite{FileDES} & Single & \XSolidBrush & \XSolidBrush & $O(1)$ & \Checkmark & $O(1)$ & KW & $O(2^\lambda)$\\ 

                Sia\cite{sia} & Multiple & \XSolidBrush & \Checkmark & $O(1)$ & \Checkmark & $O(\log n)$ & KW & $O(2^\lambda)$\\
                
				Peer2PIR\cite{mazmudar2024peer2pir} & Single & \Checkmark & \XSolidBrush & $O(n)$ & \XSolidBrush & \XSolidBrush & KW & $O(2^\lambda)$\\
				
				Periscoping~\cite{liu2024periscoping} & Single & \Checkmark & \XSolidBrush & $O(n)$ & \XSolidBrush & \XSolidBrush & Integer & $O(n)$\\

                Tajeddine et al.~\cite{tajeddine2019private} & Multiple & \Checkmark & \Checkmark & $O(n)$ & \XSolidBrush & \XSolidBrush & Integer & $O(n)$ \\

				\textbf{PIR-DSN} & Single or Multiple & \Checkmark & \Checkmark & $O(n)$ & \Checkmark & $O(\log n)$ & KW-to-Integer & $O(n)$\\
    
				\bottomrule[1pt]
			\end{tabular}
		\begin{tablenotes}
            \item[Servers involved] The number of servers participate in a single file retrieval process
            \item[KW] Keyword
            \item[$\lambda$] The security parameter whose value is 256
		\end{tablenotes}
	\end{threeparttable}
	\end{center}
\end{table*}

\section{Related Works}
\label{sec:related:work}
This section first reviews privacy protection schemes in DSNs, then introduces distributed systems employing PIR, and finally analyzes their limitations.

\subsection{Privacy Protection Schemes in DSNs}
Research on privacy protection in DSNs has primarily focused on securing data privacy. Sia~\cite{sia}, Swarm~\cite{swarm} and Storj~\cite{storj} protect data privacy by encrypting files before uploading them to storage miners. Filecoin~\cite{filecoin} and Arweave~\cite{arweave} allow clients to either upload files directly to make them publicly accessible or encrypt them locally before uploading. Recent studies aim to balance data privacy with authorized access to private files. Systems like FileDES~\cite{FileDES} and FileTrust~\cite{filetrust} utilize proxy re-encryption techniques to enable authorized access. In these systems, clients request re-encryption and decryption keys from the uploader, while storage miners re-encrypt the files and return the ciphertext. Clients can then decrypt the files using the decryption key. Wang et al.~\cite{wang2018blockchain} combined attribute-based encryption with DSNs, allowing uploaders to distribute secret keys based on an access policy.

%For example, Sia~\cite{sia} and Swarm~\cite{swarm} require clients to split files into chunks, with Sia encrypting each chunk using the Threefish algorithm and Swarm employing an XOR-based block cipher. Storj~\cite{storj} requires clients to encrypt files using the AES-GCM-256 algorithm, followed by splitting the encrypted files into chunks.

\subsection{Distributed Systems Employing PIR}
PIR allows a client to retrieve data from untrusted servers without revealing which data is being retrieved. This technique has been applied in many distributed systems except DSN. Peer2PIR~\cite{mazmudar2024peer2pir} is designed for the InterPlanetary File System (IPFS). Clients treat FIDs as keywords and use a single-server keyword PIR protocol to retrieve the target peer and file index. Then, a single-server index-based PIR is used to retrieve the file from the target peer. Tajeddine et al.~\cite{tajeddine2019private} proposed a multi-server PIR method to allow clients to privately retrieve distinct MDS codewords from multiple databases, enabling data recovery while maintaining privacy. PIR-Tor~\cite{mittal2011pirtor} addresses the issue of clients retrieving relay information to build communication circuits in the Tor network. It uses single-server PIR to retrieve middle and exit relays from a directory server and multi-server PIR to retrieve exit relays from three guard relays. Periscoping~\cite{liu2024periscoping} focuses on the issue of clients retrieving public keys in Mixnet. It employs multi-server PIR to fetch public keys from each mix in the routing path, embedding the query vectors within onion-encrypted messages to prevent any mix from learning which keys are being retrieved.

\subsection{Limitations of Existing Methods}
While existing methods, as summarized in Table~\ref{table:features}, have advanced privacy in DSNs and similar distributed systems, they still face notable limitations. A primary limitation is their restricted retrieval scope. Current approaches address either single-server or multi-server retrieval, but none offer a solution that handles both. Furthermore, some methods completely fail to protect user privacy. For example, FileDES~\cite{FileDES} and Sia~\cite{sia}'s retrieval processes allow miners to observe which files are requested, thereby learning clients' access patterns. Another critical issue is a lack of robustness. While Peer2PIR~\cite{mazmudar2024peer2pir} and Periscoping~\cite{liu2024periscoping} achieve user privacy, they fall short on robustness. If that miner returns incorrect data, clients can only detect the error, but cannot recover the correct file. Finally, several schemes, including Peer2PIR, Periscoping, and the one proposed by Tajeddine et al.,~\cite{tajeddine2019private} do not consider the public verifiability of indexes. They operate under the assumption that servers are honest-but-curious, meaning servers provide correct indexes but might passively observe queries to infer information, without accounting for potential malicious index manipulation.

\section{Model, Preliminary and Design Goals}
\label{sec:model}
This section outlines the network model that forms the foundation of PIR-DSN. We also introduce essential preliminary concepts and articulate our core design goals.

\subsection{Network Model}
\subsubsection{Participants}
% PIR-DSN consists of two protocols: the Single-Miner Private Information Retrieval Based DSN protocol (SPIR-DSN) and the Multi-Miner Private Information Retrieval Based DSN protocol (MPIR-DSN). These protocols enable clients to retrieve files using single-server or multi-server PIR. 
In PIR-DSN, the network comprises three primary participants:
\begin{itemize}
\item \textit{Clients:} devices such as personal computers, laptops, or mobile phones. They pay with cryptocurrency to perform the following operations: 1) upload files to storage miners; 2) delete files from storage miners; 3) retrieve files privately from storage miners.

\item \textit{Storage miners:} entities that provide storage resources. They perform the following tasks to handle client requests and earn profits: 1) store files; 2) delete files; 3) respond to private file retrieval queries. In addition, miners routinely generate storage proofs to demonstrate file integrity and participate in consensus processes to earn mining rewards.

\item \textit{Blockchain:} a decentralized, tamper-resistant bulletin board. It records verifiable proofs to ensure data transparency and public verifiability. 
\end{itemize}

\subsubsection{Adversary} 
\label{subsubsec:Adversary}
Storage miners can be either honest, meticulously adhering to the protocol, or malicious, operating under the control of an adversary. Our security model posits that the adversary controls fewer than one-third of the total miners. Exceeding this threshold would render consensus within PIR-DSN impractical~\cite{lamport2019byzantine}. The adversary's primary goal is to undermine PIR-DSN's data integrity and availability by deviating from the protocol. While existing security mechanisms in DSNs generally address issues like data integrity and content confidentiality, our focus is specifically on new vulnerabilities introduced by integrating PIR. Specifically, the adversary can launch an \textit{index manipulation attack}~\cite{prunster2022total,sridhar2023content}, targeting our novel FID-to-index mapping mechanism. 

In PIR-DSN, this attack can manifest in several ways:
\circlednumber{1} \textit{Upload:} When a file is uploaded, a malicious miner could exploit the system by mapping the new FID to an index already in use. This allows them to trick clients into unknowingly retrieving a different file when querying for the original. Alternatively, the miner could assign the FID to an unoccupied but incorrect index, thereby disrupting the overall index arrangement.
\circlednumber{2} \textit{Deletion:} When a client initiates file deletion, a malicious miner could delete FID mappings the client did not specify, leading to the client receiving an empty response if they later query for that mistakenly removed file. Conversely, the miner might falsely claim to delete a requested FID mapping while leaving it intact, allowing the client to still retrieve a file that should have been deleted.
\circlednumber{3} \textit{Modification (at any time):} A malicious miner could, at any point, arbitrarily alter the index associated with any FID. This means the indexes a client expects for their files might not match what the miner maintains. Consequently, if a client queries for a file whose index has been tampered with, they could receive the wrong file or nothing at all.

Besides, we make the standard assumption that all nodes operate within polynomial time, meaning they do not possess unrestricted computational power. Additionally, we assume that cryptographic primitives, such as hash functions and digital signatures, are secure.

\subsection{Preliminary}
\subsubsection{Private Information Retrieval (PIR)}
PIR enables clients to query a public database at an index $i$ without revealing $i$ to the database servers. PIR protocols are typically categorized as either single-server or multi-server.

Single-server PIR protocols require only one computationally bounded server and are often constructed using homomorphic public-key cryptosystems~\cite{simplepir, sealpir}. A single-server PIR protocol consists of the following algorithms:
$$\mathsf{PIR_{S} = (Query, Answer, Decrypt}).$$ 
Clients use the $\mathsf{Query}$ algorithm to generate and send query vectors, and the $\mathsf{Decrypt}$ algorithm to obtain the queried database record. Servers use the $\mathsf{Answer}$ algorithm to compute query answers and respond to clients. A single-server PIR scheme must satisfy the properties of correctness and privacy. It is correct if an honest client interacting with an honest server always retrieves the correct database record. It is private if a malicious server cannot guess the client’s queried index with probability better than random guessing.

We adopt Multi-server PIR protocols that involve $p>1$ servers holding identical database replicas~\cite{chor1998private,APIR}. They avoid cryptographic hardness assumptions and use lightweight operations such as XOR or function secret sharing. A multi-server PIR protocol consists of the following algorithms:
$$\mathsf{PIR_{M} = (Query, Answer, Reconstruct}).$$

Clients use the $\mathsf{Reconstruct}$ algorithm to obtain the queried database record from multiple query answers. A multi-server PIR protocol must also satisfy the properties of correctness and privacy. It is correct if a client interacting with $p$ honest servers can always recover the correct record. It is private if any coalition of up to $p-1$ malicious servers cannot guess the client’s queried index with probability better than random guessing.

\subsubsection{Asynchronous Cryptographic Accumulator (ACA)}
ACA~\cite{AsyncACC} is a space-efficient data structure that uses cryptographic primitives for compact set representation. It is structured as a Merkle forest, where each element in the set is stored in a leaf. An ACA consists of the following algorithms: 
$$\mathsf{ACA = (Gen, Insert, Delete, VerMem}).$$
The $\mathsf{Gen}$ algorithm initializes an empty ACA. The $\mathsf{Insert}$ and $\mathsf{Delete}$ algorithms insert or remove elements at specific leaves and generate corresponding witnesses $\vec{w}$. The $\mathsf{VerMem}$ algorithm verifies an element’s membership using its witness.

% We utilize the ACA to allow a storage miner to establish mappings from sparse FIDs to compact indexes by inserting FIDs or removing existing mappings by deleting FIDs. Additionally, the ACA enables the generation of proofs that other miners can use to verify the correctness of these operations. This functionality helps clients to retrieve files efficiently and correctly using PIR.

\subsection{Design Goals}
PIR-DSN addresses the attacks by malicious miners and the inefficiency caused by sparse FIDs. Its design should meet four goals:
\begin{itemize}
    \item \textit{Public verifiability:} Throughout a DSN's operation, a verifier can continuously check that mappings are not being manipulated by an adversary performing attacks \circlednumber{1}\circlednumber{2}\circlednumber{3}. With this property, if a violation is found, the system can reject the unauthorized operation.
    \item \textit{Privacy:} When a client retrieves a file from one or multiple miners, an adversary cannot guess the client’s queried index with probability better than random guessing.
    \item \textit{Robustness (in multi-server setting):} When a client retrieves a file from multiple miners, it can both reconstruct the correct file and identify every miner that returns an incorrect query answer.
    \item \textit{Efficiency:} If a miner stores $n$ files, then (1) any verifier can check the correctness of file upload or deletion in $O(\log n)$ complexity, and (2) a client can retrieve a file from one or multiple miners with a single PIR invocation, while the computation complexity of each miner is $O(n)$.
\end{itemize}

\begin{figure*}[!th] 
	\centering 
	\includegraphics[width=1.00\textwidth]{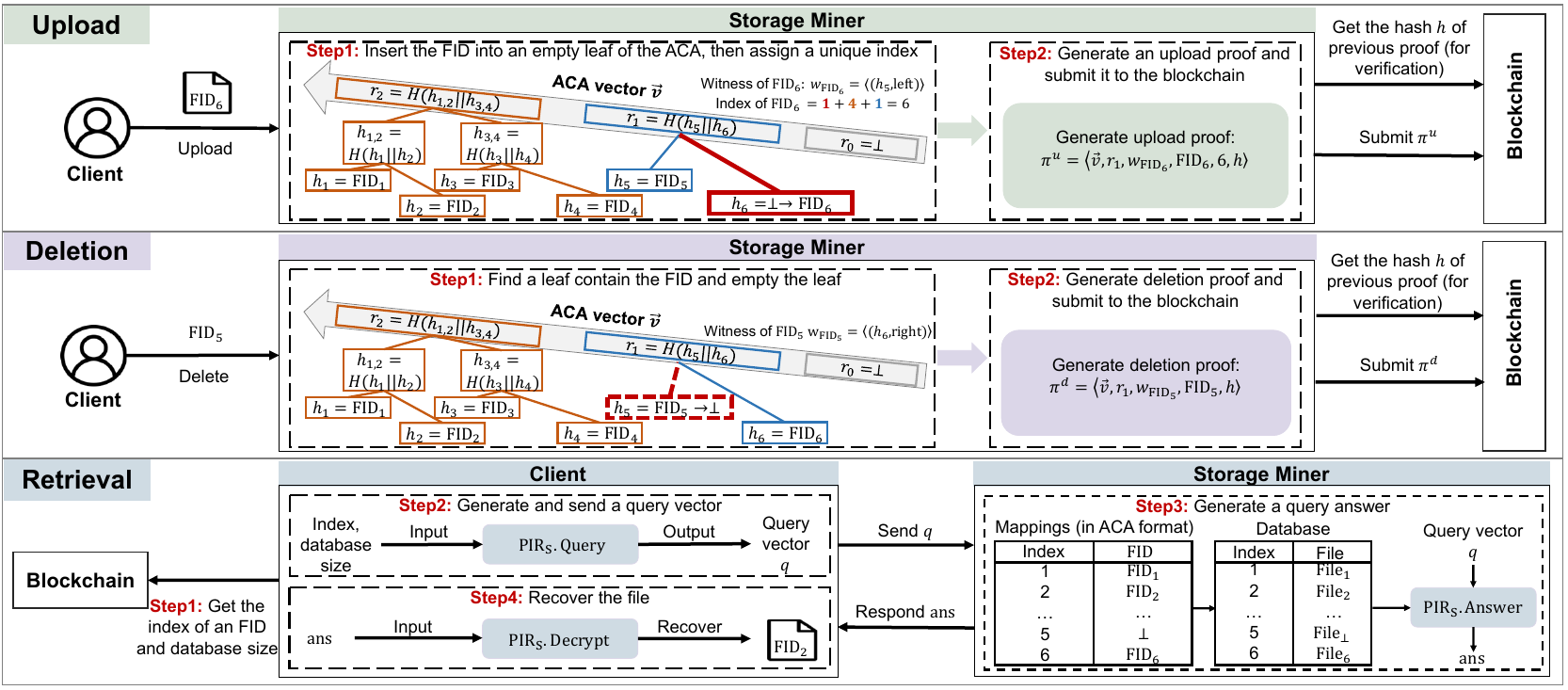} 
	\caption{The diagram of the SPIR-DSN protocol} 
	\label{SPIR protocol}
\end{figure*} 

\section{Single-Miner PIR on DSN}
\label{sec::Single-PIR}
We first propose the Single-Miner Private Information Retrieval Based DSN (SPIR-DSN) protocol, whose diagram is shown in Fig.~\ref{SPIR protocol}. We describe our protocol in terms of the three operations: upload, deletion, and retrieval.

\subsection{File Upload} A key technical challenge in file management, particularly during uploads, is ensuring the verifiability of mappings to detect malicious miners. Such miners might map a new FID to an already-used or incorrect index, thereby disrupting the index arrangement and causing clients to retrieve incorrect files.

Our approach tackles this by using an Asynchronous Cryptographic Accumulator  (ACA) to establish and prove mappings. The ACA consists of a vector $\vec{v}=(r_0,r_1,\dots,r_{m-1})$ and $n$ FIDs (structured as Merkle forest). Each element $r_i$ is either a Merkle root or $\perp$. If $r_i \neq \perp$, each of its $2^i$ leaves either stores an FID or a placeholder $\perp$. If $r_i = \perp$, this represents a Merkle tree where all $2^i$ leaves store $\perp$. With ACA, we can map $n$ FIDs to compact indexes within the range $[1,2,\dots,n]$. Moreover, we leverage ACA to provide a membership witness for each FID-to-index mapping to confirm their presence within the ACA. Then, we generate a proof for each mapping and record it on a bulletin board (i.e., blockchain), enabling the public verification of mapping correctness. Further details of our approach are as follows. 

\subsubsection{Mapping Establishment} 
Our method involves systematically mapping $n$ FIDs by sequentially inserting them into an ACA, with each FID assigned an integer index within the compact range $[1, 2, \dots, n]$. The ACA vector is initialized using $\mathsf{ACA.Gen}$ and defined as $\vec{v} = (r_0)$, where $r_0 = \perp$ (representing vacancy). The state of an ACA is represented by its vector $\vec{v} = (r_0, r_1, \dots, r_{m-1})$, where $m$ is the length of the vector. Two scenarios arise during the insertion of an FID:
\begin{enumerate}
    \item Full ACA: All leaves in the ACA are occupied by FIDs. In this scenario, it is necessary to extend the ACA to accommodate additional insertions.
    \item Partial ACA: One or more leaves contain $\perp$, indicating vacant leaves. In this case, the FID is inserted into the first available vacant leaf.
\end{enumerate}

%\footnote{Numbering starting from $r_0$ is convenient for index calculation.}

Specifically, the ACA vector $\vec{v}$ is traversed from $r_{m-1}$ to $r_0$ to locate the first $r_i$ that satisfies one of the following conditions: (1) $r_i = \perp$; or (2) $r_i \neq \perp$ and the Merkle tree associated with $r_i$ contains leaves storing $\perp$. If no such $r_i$ is found (Full ACA),  a new Merkle tree $r_m$ is created to provide additional index space. We then invoke the $\mathsf{ACA.Insert}$ algorithm and the newly inserted FID is then merged with the $2^m-1$ leaves of the Merkle trees spanning $r_0$ to $r_{m-1}$, forming a complete Merkle tree with $2^m$ leaves. The root of this new tree is stored in $r_m$, and the FID is placed in the rightmost leaf. Subsequently, the values from $r_0$ to $r_{m-1}$, as well as all merged leaves, are set to $\perp$. If an appropriate $r_i$ is found (Partial ACA), the FID is inserted into the leftmost leaf if $r_i = \perp$. If $r_i \neq \perp$, a pre-order depth-first search (DFS) is conducted to traverse the leaves of the associated Merkle tree, and the FID is inserted into the first leaf that stores $\perp$ by invoking the $\mathsf{ACA.Insert}$ algorithm.

The index assignment process occurs after the FID is correctly inserted. Consider the case where a new FID is inserted into a Merkle tree with root $r_k$. The index is determined using three parameters: $\vec{v}$, $r_k$, and the FID membership witness $\vec{w}$. The first step involves using $\vec{v}$ and $r_k$ to compute the number of leaves preceding the root of the Merkle tree $r_k$. In the next step, the FID membership witness $\vec{w}$ is utilized to identify the position of the newly inserted FID and to compute the number of leaves inserted prior to the FID within the same tree. The vector $\vec{w}$ represents the Merkle path of the FID, where each element consists of a hash and a position indicator (either $\mathsf{left}$ or $\mathsf{right}$).
To determine the index of the new FID, we initialize a counter $\mathsf{index} = 1$. We then iterate through the ACA vector $\vec{v}$ from $r_{k+1}$ to $r_{m-1}$. For each $r_i \neq \perp$, we increment the index by $2^i$. Subsequently, we traverse the Merkle tree $\vec{w}$ from $w_0$ to $w_{k-1}$. For each node $w_j$ that is a left child, we increment the index by $2^j$. For example, in Fig.~\ref{SPIR protocol}, the index of the newly inserted $\mathsf{FID}_6$ is calculated as follows. We start with $\mathsf{index} = 1$. Since $v_2$ is not empty, we add $2^2 = 4$ to the index. As $w_0=\mathsf{FID}_5$ is a left child, we add $2^0 = 1$ to the index. The final index is $1 + 4 + 1 = 6$. 

\subsubsection{Upload Verification} 
After the mapping is established, the storage miner responsible for maintaining the FID submits an upload proof to the blockchain for verification of the mapping. The upload proof is represented as $\pi^{u} = \langle \vec{v}, r_k, \vec{w}, \mathsf{FID}, \mathsf{index}, h \rangle$, where $h$ denotes the hash of the previous state proof generated by the storage miner during the last file upload or deletion. The verification of $\pi^{u}$ involves three distinct steps:

\begin{enumerate}
    \item \textbf{FID existence}: The presence of the FID in the ACA is confirmed by utilizing $\vec{v}$, $r_k$, $\vec{w}$, and $\mathsf{FID}$. Specifically, we invoke the $\mathsf{ACA.VerMer}$ algorithm to verify that $r_k$ corresponds to the $k$-th element of $\vec{v}$ and can be recalculated using $\vec{w}$ and $\mathsf{FID}$. %$O(logN/2)$
    \item \textbf{Correct index position}: The accuracy of the index assignment is verified by checking $\vec{v}$, $r_k$, $\vec{w}$, and $\mathsf{index}$. This involves re-running the index assignment process with $\vec{v}$, $r_k$, and $\vec{w}$ to ensure the result matches $\mathsf{index}$. %$O(logN)+O(logN/2)$
    \item \textbf{Conflict-free index}: To ensure that the newly inserted FID does not result in an index conflict, we use $\vec{v}$, $\vec{w}$, and $h$ to compare the current and previous ACA states. Let $\vec{v}^{t} = (r^{t}_0, r^{t}_1, \dots, r^{t}_{m-1})$ represent the ACA vector of the current state, and $\vec{v}^{(t-1)} = (r^{t-1}_0, r^{t-1}_1, \dots, r^{t-1}_{l-1})$ be the previous state, obtained from $h$ stored in the blockchain. Here, $r^{t}_k$ is the root of the Merkle tree containing the FID, and $r^{t-1}_k$ is the root from the same position in $\vec{v}^{(t-1)}$. By treating the ACA as a state machine with chain-structured transitions, we can efficiently determine if the operation can transition $\vec{v}^{(t-1)}$ to $\vec{v}^{(t)}$. %$O(logN/2)$
\end{enumerate}

% To save on-chain storage costs, the mapping verification does not consider the case when a storage miner deviates from the protocol and maliciously selects an empty (not the correct one) leaf to insert an FID. This is because to identify such faulty behavior, the proof must include all mappings on-chain. Besides, the impact of such attack is minimal since it does not affect the correctness of file retrieval but only impacts the compactness of indexes. To mitigate this attack, we can further restrict the maximum index that ACAs can accept. For example, limiting the maximum index in an ACA to 256 reduces the potential damage and preserves compactness, as attackers can only generate indexes no larger than 256.

%\begin{mdframed}[backgroundcolor=blue!4]  
%\noindent\textit{\textbf{Technical Contribution~(I).}}
%Our method ensures that the indexes of the existing FIDs are preserved. The index of an FID is defined by the count of leaves to its left, encompassing both empty and non-empty leaves. As a result, inserting a new FID into either a complete ACA or a partial ACA does not alter the number of leaves to the left of any existing FID. This leads to a significant reduction in the computational cost associated with updating the mapping and generating the corresponding proofs.
%\end{mdframed}

%Verifiability is critical for detecting malicious storage miners who might attempt to undermine the retrievability of files.

\subsection{File Deletion} 
Similar to the file upload operation, the primary challenge in file deletion involves ensuring the verifiability of the mapping removal. Malicious miners may delete FID mappings the client did not specify or falsely claim to delete a requested FID mapping while leaving it intact, causing clients either unable to retrieve expected files or still able to access files that should have been deleted.

%When deleting a file, malicious miners could delete FID mappings that the client did not specify or claim to have deleted the target FID mappings but actually do nothing. They can also arbitrarily modify the index associated with any FID.

To resolve this issue, we are adapting our file upload verification process to secure file deletion. Specifically, we remove an FID from the ACA, generate a proof of this deletion, and record this proof on a blockchain. This method ensures that storage miners in our DSN can publicly verify the integrity of the mapping deletion. With the ability to accurately insert or remove FIDs, we can effectively manage FID updates.

\subsubsection{Mapping Deletion} To remove an FID, we first identify the $\mathsf{leaf}$ containing the FID and the corresponding Merkle tree $r_k$ that includes this $\mathsf{leaf}$. The deletion process is determined by whether $r_k$ contains other FIDs, and can be categorized into two scenarios: (1) Case 1: If $r_k$ contains additional FIDs, we invoke the $\mathsf{ACA.Delete}$ algorithm to construct the membership witness $\vec{w}$ for the FID and update the value stored at $\mathsf{leaf}$ to $\perp$. Subsequently, the root of the Merkle tree is recalculated, and the updated value is stored in $r_k$. (2) Case 2: If $r_k$ contains no other FIDs, we also invoke the $\mathsf{ACA.Delete}$ algorithm to generate the membership witness $\vec{w}$ ($\vec{w}=\perp$ in this case) for the FID and set the value at $\mathsf{leaf}$ to $\perp$. Since all leaves in $r_k$ now hold the value $\perp$, the entire Merkle tree $r_k$ is set to $\perp$.

\subsubsection{Deletion Verification} After mapping deletion, the storage miner submits a proof $\pi^{d}$ = $\langle$$\vec{v}$, $r_k$, $\vec{w}$, $\mathsf{FID}$, $h$$\rangle$ to the blockchain for deletion verification purposes. The verification of $\pi^{d}$ involves two steps: 

\begin{enumerate}
    \item \textbf{FID Integrity in ACA State:} We utilize the vectors $\vec{v}$, $\vec{w}$, the function $\mathsf{FID}$, and the value $h$ to confirm that the FID exists in the previous state of the ACA and that the mappings of other FIDs remain unaffected by the deletion. Initially, we retrieve the prior ACA state $\vec{v}^{(t-1)}$ corresponding to $h$ from the blockchain. Subsequently, we invoke the $\mathsf{ACA.VerMer}$ algorithm to check if the combination of $\vec{w}$ and $\mathsf{FID}$ can recalculate $r^{t-1}_k$, which verifies the presence of the FID in the previous ACA state. Finally, we examine whether all elements in both the current and previous state vectors are identical, except for the $k$-th element, confirming that the mappings of FIDs in other Merkle trees have not been modified. %$O(logN)+O(logN/2)$
    \item \textbf{Correct Execution of Mapping Deletion:} To validate that the miner has properly executed the mapping deletion, we utilize the vectors $\vec{v}$, $r_k$, $\vec{w}$, and $\mathsf{FID}$. We first assess whether $r_k = \perp$. If $r_k \neq \perp$, this indicates that $r_k$ contained other FIDs before the deletion. We then verify that $r_k$ corresponds to the $k$-th element of $\vec{v}$ and can be recalculated using $\vec{w}$ and $\perp$. If $r_k = \perp$, it suggests that $r_k$ only contained the FID before deletion. In this scenario, we verify if $\vec{w}=\perp$. %$O(logN/2)$
\end{enumerate}

%\begin{mdframed}[backgroundcolor=blue!4]  
%\noindent\textit{\textbf{Technical Contribution~(II).}}
%Our method guarantees that storage miners can publicly verify the integrity of the mapping deletion, thereby preventing malicious behavior such as falsely claiming to delete mappings or deleting mappings of arbitrary files.
%\end{mdframed}

\subsection{File Retrieval} 
In SPIR-DSN, a client retrieves files from a storage miner using a single-server PIR protocol. Specifically, the client first obtains the index of an FID and the database size from the upload proof recorded by the blockchain. Next, the client uses the $\mathsf{PIR_{S}.Query}$ algorithm to construct a query vector and send it to the miner. The miner constructs its database by placing the file corresponding to the FID with index $i$ as the $i$-th entry and filling the remaining entries with blank files. Upon receiving the query vector, the miner uses $\mathsf{PIR_{S}.Answer}$ algorithm to generate the query answer and transmit it to the client. Finally, the client uses $\mathsf{PIR_{S}.Decrypt}$ to recover the file and verify its integrity using the FID.

\section{Multi-Miner PIR on DSN}
\label{sec::Multi-PIR}
The Multi-Miner PIR Based DSN (MPIR-DSN) protocol lets clients fetch files from several storage miners at the same time. While you could retrieve a file from a single miner using the SPIR-DSN protocol, MPIR-DSN makes it possible to retrieve from multiple miners concurrently. This is a step up because, unlike single-server PIR, which often needs heavy-duty homomorphic multiplications, multi-server PIR uses simple operations like bitwise XOR. This boosts retrieval efficiency in these kinds of storage setups.

MPIR-DSN and SPIR-DSN cater to different application scenarios. In MPIR-DSN, all storage miners (specifically, $N=3f+1$ of them) hold identical copies of the database and manage all file operations. This full replication means that storage overhead grows linearly with the number of miners. Because of this, MPIR-DSN is particularly well-suited for smaller, latency-sensitive DSNs, such as those found in enterprise partnerships or regional CDN clusters.

\subsection{Database State Replication} A primary challenge in MPIR-DSN is maintaining a consistent database state, which is essential for accurate file retrieval using multi-server PIR. The database state is dynamic, being continuously modified through file upload and deletion requests. However, malicious miners may launch index manipulation attacks, resulting in inconsistent database states across miners. This inconsistency can lead to incorrect file retrieval by clients. 
We address this challenge using a Byzantine Fault-Tolerant State Machine Replication (BFT-SMR) protocol. Specifically, MPIR-DSN instantiates the HotStuff~\cite{hotstuff} protocol and integrates our ACA-based proof generation and verification process, which we identify with \uline{an underline} when describing the following protocol. Our protocol enables miners to reach consensus on the order of client requests and to apply these requests in an agreed order.

\begin{enumerate}
    \item \textit{Order the requests and generate proofs:} 
    Miners take turns acting as the primary miner. The primary miner orders the received client requests, \uline{invokes the $\mathsf{ACA.Insert}$ algorithm to generate upload proofs ($\pi^u$) for upload request, and invokes the $\mathsf{ACA.Delete}$ algorithm to generate deletion proofs ($\pi^d$) for deletion request.} These proofs are then packaged into a block and broadcast to the other miners.
    \item \textit{Verify proofs of the requests:} Upon receiving a block, \uline{each miner verifies all $\pi^u$ and $\pi^d$ in the block using the appropriate verification processes, ensuring the correctness of the proofs and the validity of the associated client requests.} If the block is valid, the miner broadcasts a prepare message.
    \item \textit{Commit and finalize the order:} Once a miner collects at least $2N/3$ prepare messages, it broadcasts a commit message. Upon receiving at least $2N/3$ commit messages, the miner finalizes the block and processes the requests in the agreed order.
\end{enumerate}
If any step fails to complete within a predefined timeout, the system invokes the view-change protocol to select a new primary miner, thereby ensuring liveness.

\subsection{Byzantine-Robust File Retrieval} A secondary challenge in MPIR-DSN is ensuring robust file retrieval. In multi-server PIR, a client distributes queries to multiple miners and aggregates their responses to reconstruct the requested file. However, malicious miners may arbitrarily alter the index associated with any FID to return incorrect responses to clients. Without robustness, the client may be unable to retrieve the correct file and cannot detect miners providing incorrect responses.

To address this issue, we integrate a Byzantine-Robust Private Information Retrieval (BR-PIR) scheme~\cite{goldberg2007improving}. BR-PIR is a specialized multi-server PIR protocol that not only preserves user privacy but also tolerates a certain number of malicious miners returning incorrect answers. As a result, clients can still retrieve the correct file while identifying the faulty miners.
In MPIR-DSN, the client first obtains the file index and database size from the upload proofs, then invokes the $\mathsf{PIR_M.Query}$ algorithm of the BR-PIR protocol to generate $N$ query vectors, which are sent to the $N$ miners. Each miner computes a response using the $\mathsf{PIR_M.Answer}$ algorithm. Upon receiving the responses (or after a timeout), the client applies the $\mathsf{PIR_M.Reconstruct}$ algorithm to recover the file and identify miners that returned incorrect results.

%characterized by three parameters: $p$, $c$, and $b$. A $(p,c,b)$ BR-PIR scheme assumes $p$ participating servers and provides two guarantees: 1) it maintains client privacy as long as at most $c$ servers collude, and 2) it ensures correct file retrieval while identifying at most $b$ servers that provide incorrect responses or fail to respond. In MPIR-DSN, we adopt a BR-PIR scheme with $p = N$, $c = N/3$, and $b \geq N/3$. With these parameters, clients can query all $N$ miners to privately retrieve the correct file while detecting malicious miners. For instance, the BR-PIR scheme proposed by Goldberg~\cite{goldberg2007improving} satisfies these requirements, as it tolerates up to $(1 - \sqrt{3}/3)N$ malicious miners, exceeding the baseline tolerance of $N/3$.

%\begin{mdframed}[backgroundcolor=blue!4]  
%\noindent\textit{\textbf{Technical Contribution~(IV).}}
%Our approach guarantees that clients can retrieve files correctly with a single retrieval request while detecting storage miners that provide incorrect results or fail to respond, thereby disincentivizing such malicious behavior. Besides, our approach is the first protocol to enable robust and private file retrieval in DSNs.
%\end{mdframed}

\begin{figure*}[!t]
\centering
\subfigure[File upload]{
\label{File Upload}
\includegraphics[width=0.23\linewidth]{./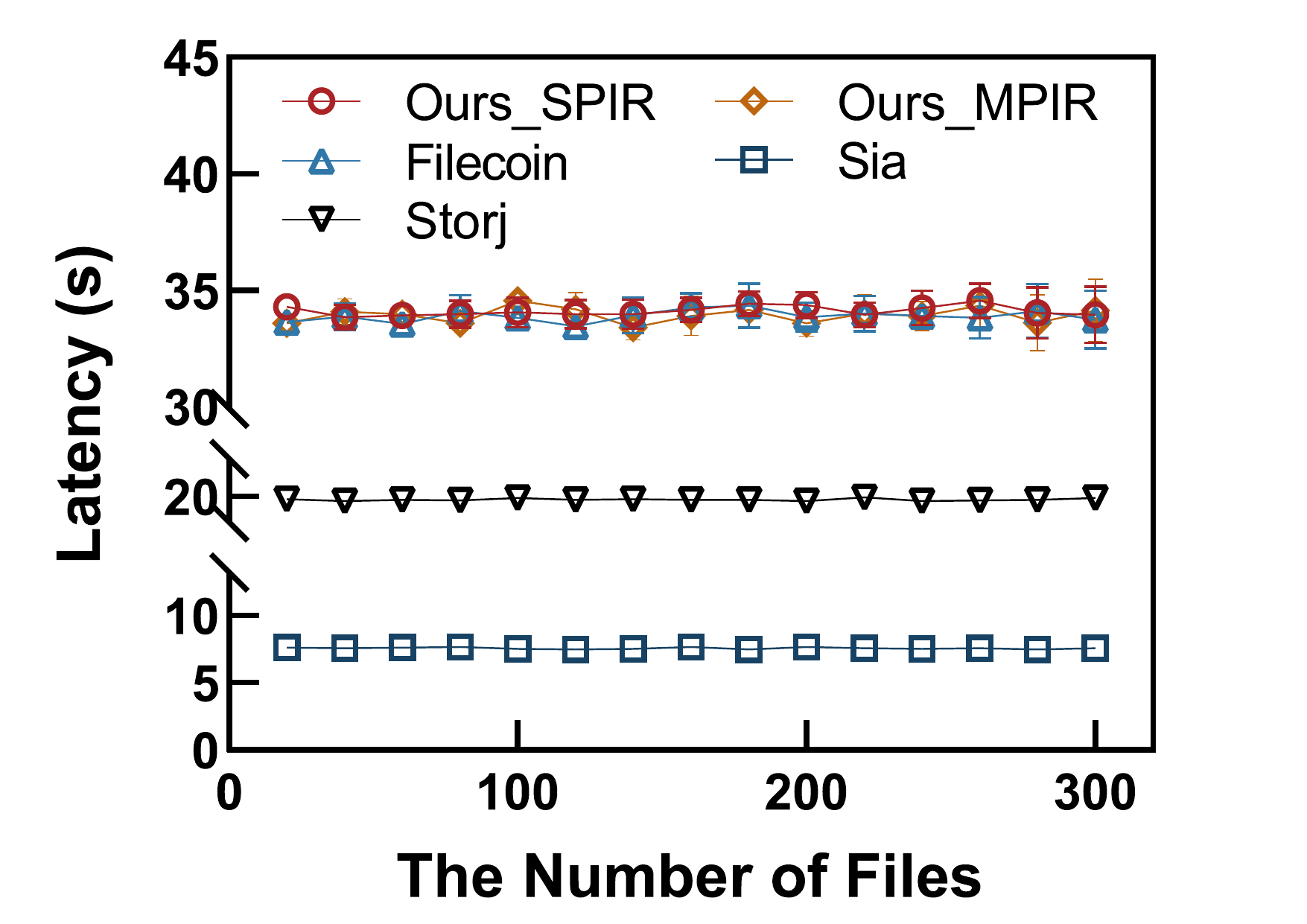}}
% \quad
\subfigure[File deletion]{
\label{File Deletion}
\includegraphics[width=0.23\linewidth]{./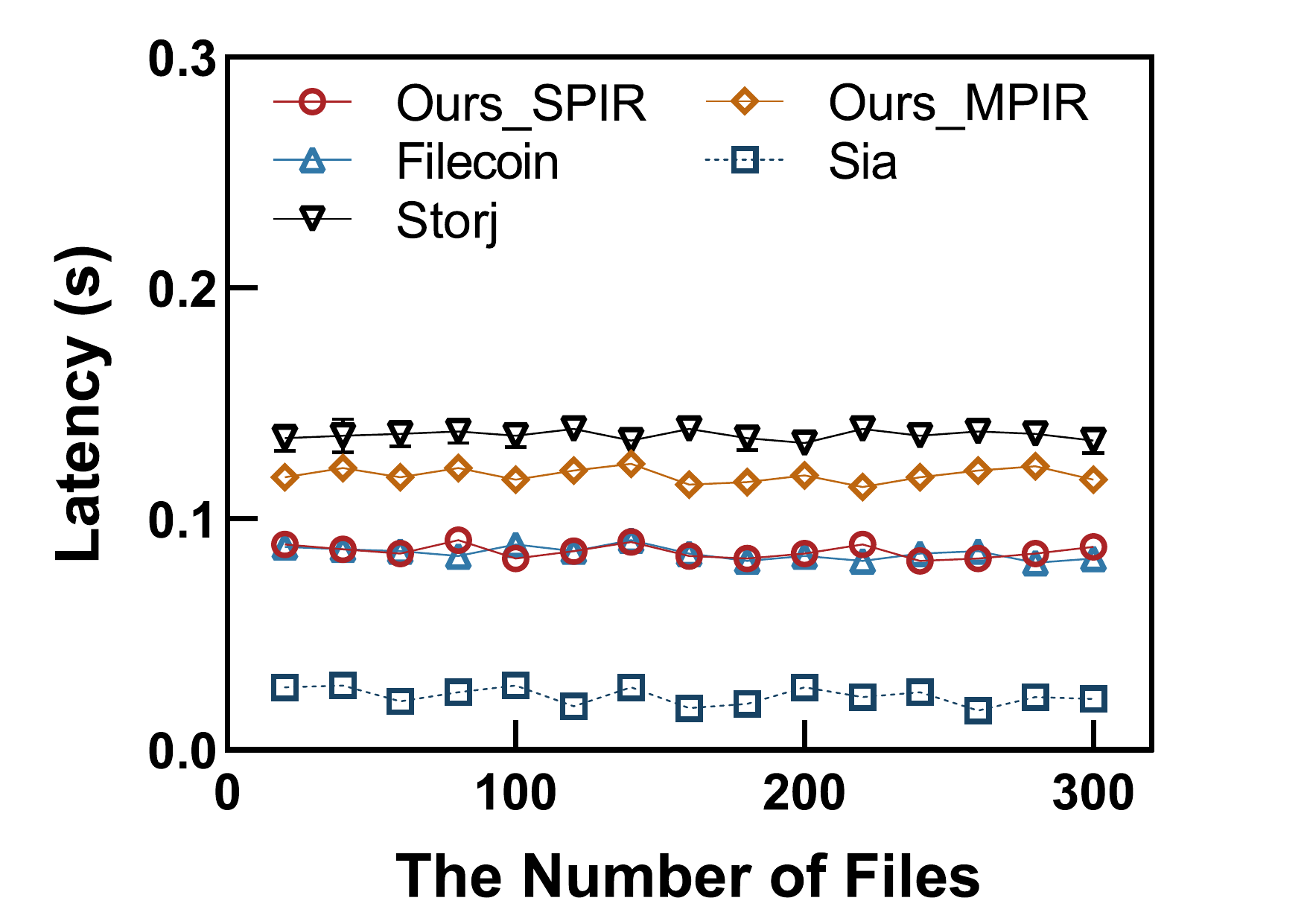}}
% \quad
\subfigure[Proof verification]{
\label{Proof Verification}
\includegraphics[width=0.23\linewidth]{./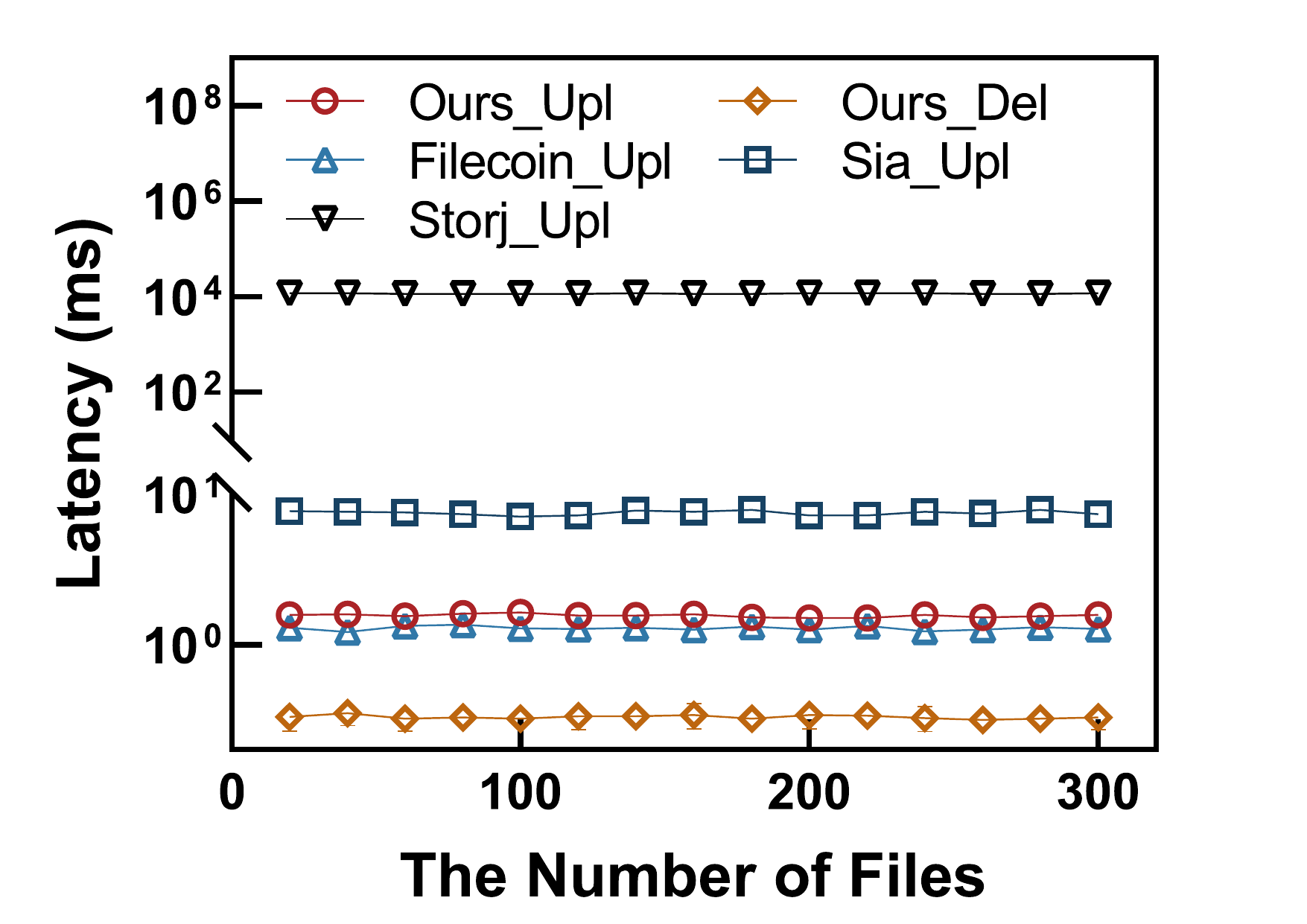}}
% \quad
\subfigure[File retrieval]{
\label{File Retrieval}
\includegraphics[width=0.23\linewidth]{./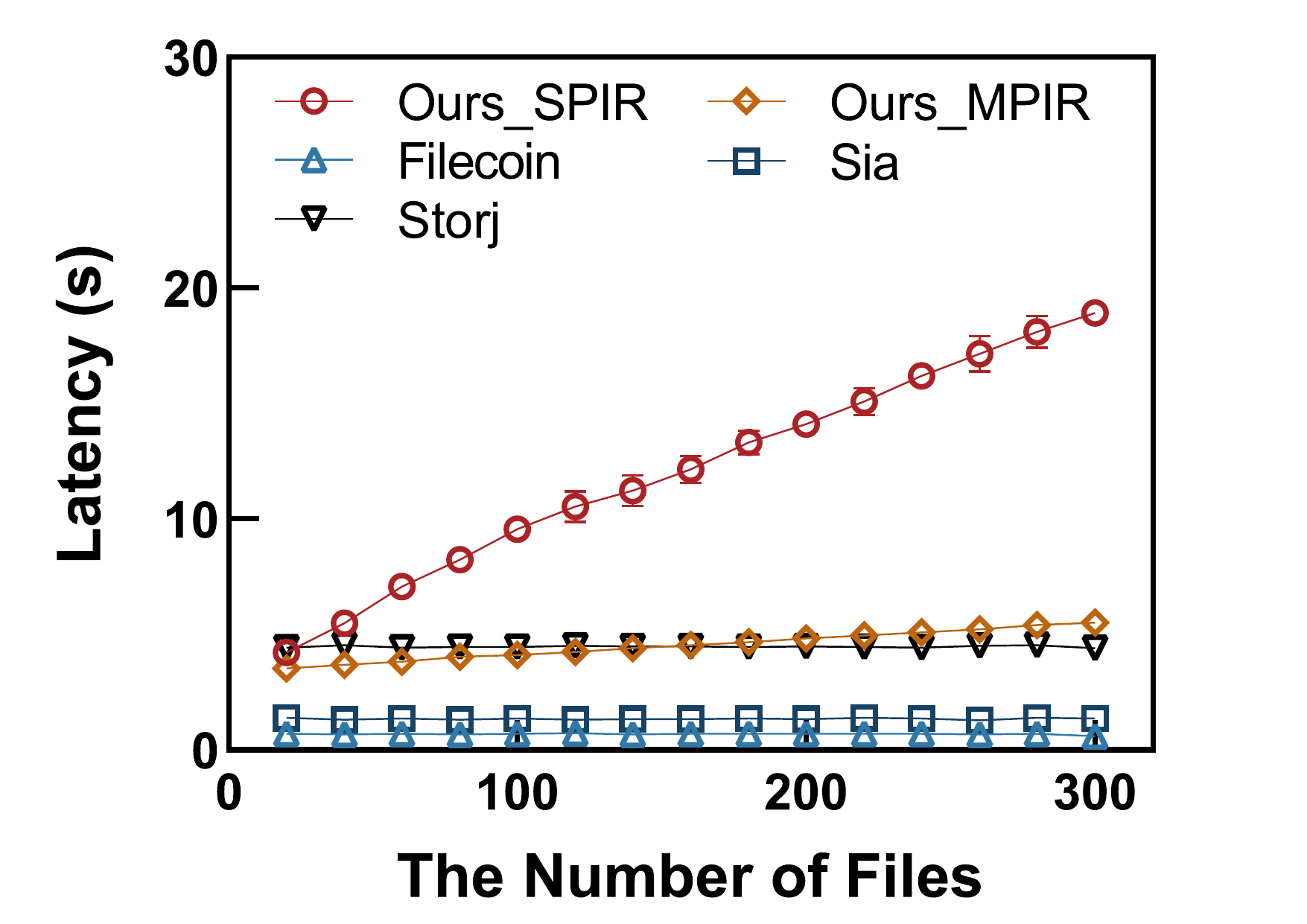}}
% \quad
\caption{Latency of four basic operations}
\label{File Operation}
\end{figure*}

\section{Analysis}
In this section, we analyze the security of the PIR-DSN protocol, showing it satisfies public verifiability, privacy, robustness, and efficiency as defined in our design goals.

\begin{theorem}[]
    Assuming a storage miner stores $n$ files, PIR-DSN achieves advantages in terms of both security and efficiency. (1) Security: PIR-DSN satisfies public verifiability, privacy, and robustness as defined in our design goals. (2) efficiency: it can verify a file upload and deletion both in $O(\log n)$ complexity and retrieve a file from one or multiple miners with a single PIR invocation, requiring each miner to perform $O(n)$ computation.
\label{theorem:security}
\end{theorem}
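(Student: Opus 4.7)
The plan is to decompose Theorem~\ref{theorem:security} into four independent sub-claims (public verifiability, privacy, robustness, efficiency) and reduce each to a property of an underlying primitive or to a counting argument on the ACA structure. The security reductions rest on three assumptions already invoked in Section~\ref{sec:model}: collision resistance of the hash underlying the Merkle forest, privacy of $\mathsf{PIR_{S}}$ and $\mathsf{PIR_{M}}$, and safety/liveness of the HotStuff BFT-SMR instantiation.

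For public verifiability I would handle the three attack types separately. For an upload attack \circlednumber{1}, a malicious miner who maps a new FID to an occupied or incorrect index must submit $\pi^{u}=\langle \vec{v}, r_k, \vec{w}, \mathsf{FID}, \mathsf{index}, h\rangle$ that passes the three on-chain checks. The FID-existence and index-position checks jointly pin $\mathsf{index}$ to the unique value produced by the deterministic assignment procedure on $(\vec{v}, r_k, \vec{w})$, so an arbitrarily chosen index fails verification. The conflict-free check reconstructs $\vec{v}^{(t-1)}$ from $h$ and enforces that only the $k$-th slot differs, so overwriting a previously occupied leaf would require either a Merkle collision against the committed $r^{(t-1)}_k$ or a break of the hash chain on $h$. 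The same template handles deletion attack \circlednumber{2}: FID integrity forces the removed FID to have been present at $\vec{v}^{(t-1)}$ and slot-wise equality elsewhere prevents silent deletion of unrelated FIDs, while the correct-execution step distinguishes the two ACA cases ($r_k=\perp$ versus $r_k\neq\perp$). Modification attack \circlednumber{3} is subsumed by the chaining argument, since altering any existing mapping forces production of some $\pi^{u}$ or $\pi^{d}$ whose previous-state hash still matches the blockchain record.

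Privacy would follow by reduction. In SPIR-DSN the client's only message beyond public on-chain data is the $\mathsf{PIR_{S}.Query}$ vector, so any adversary distinguishing the queried index can be turned directly into a distinguisher against $\mathsf{PIR_{S}}$; in MPIR-DSN the same reduction applies to any coalition of up to $N-1$ miners holding $\mathsf{PIR_{M}.Query}$ shares. Robustness in the multi-miner setting would be argued in two layers: first, BFT-SMR together with the verification processes for $\pi^{u}$ and $\pi^{d}$ guarantees that every honest miner stores an identical database, so honest responses are consistent with a common index-to-file map; second, the BR-PIR guarantee of~\cite{goldberg2007improving} allows the client to reconstruct the file and single out each miner whose answer disagrees with any valid decoding of the remaining shares, provided the corruption fraction stays below the Byzantine threshold of Section~\ref{subsubsec:Adversary}.

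For efficiency I would walk through each algorithm. Upload and deletion verification each consist of one $\mathsf{ACA.VerMem}$ call plus a constant-size state comparison; since the ACA is a Merkle forest of height at most $\lceil\log_2 n\rceil$, these cost $O(\log n)$ hash evaluations. File retrieval issues a single $\mathsf{PIR_{S}}$ or $\mathsf{PIR_{M}}$ query against a database of size at most $n$, and the $\mathsf{Answer}$ algorithm of both adopted PIR protocols is a linear scan, giving $O(n)$ per-miner work by construction. The step I expect to be the main obstacle is the conflict-free/integrity argument inside public verifiability: it requires a careful induction showing that the ACA state transition is deterministic given $(\vec{v}^{(t-1)},\text{operation})$ and that any deviation leaves a detectable witness on the hash chain, and that inductive step is where the bookkeeping around the two insertion and two deletion cases will need the most care.
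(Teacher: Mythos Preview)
Your decomposition and the reductions you sketch are essentially the same as the paper's proof: both split the theorem into public verifiability, privacy, robustness, and efficiency, reduce the first three to the hash/PIR/BFT primitives, and bound the last by the $O(\log n)$ depth of the ACA Merkle forest. The one place the paper is more concrete than your proposal is robustness: rather than appealing generically to ``the BR-PIR guarantee of~\cite{goldberg2007improving} \ldots\ provided the corruption fraction stays below the Byzantine threshold,'' the paper invokes Goldberg's explicit condition $h>\sqrt{kc}$ as a lemma, instantiates the worst case with $c=(N-1)/3$, $k=N$, $h=(2N+1)/3$, and checks the inequality $(2N+1)/3>\sqrt{N(N-1)/3}$ directly---you should include that numerical verification, since the $f<N/3$ Byzantine bound does not by itself imply BR-PIR's decoding threshold is met.
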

\begin{proof}
    For public verifiability, during upload verification, a verifier recomputes the index for the new FID and checks that this position in the previous ACA vector is empty. If a malicious miner maps the FID to (1) an index outside the ACA range or (2) an occupied index, the inconsistency is detected and the upload proof $\pi^{u}$ is rejected. 
    During deletion verification, assume a client requests to delete an FID mapping in $r_k$. A verifier checks (1) the witness in the deletion proof $\pi^d$ can be used to compute the $k$-th element in both the previous and current ACA vector, and (2) all elements except the $k$-th in the current and previous ACA vector are the same. Thus, a miner should only delete the request FID in $r_k$ to pass these two verifications. 

    For privacy, in SPIR-DSN, a client submits one query using $\mathsf{PIR_S.Query}$. The single-server PIR protocol ensures that an adversary gains no advantage over random guessing. In MPIR-DSN, the client sends $N$ queries using $\mathsf{PIR_M.Query}$, one to each of $N=3f+1$ miners. The multi-server PIR protocol guarantees that even if up to $N-1$ miners collude, they cannot infer the queried index. Adversaries in our model can control at most $(N-1)/3$ miners, which is below the threshold needed to compromise privacy.

    To achieve robustness in MPIR-DSN, we utilize Goldberg's BR-PIR protocol. Therefore, we first introduce a lemma, formally proven in his work~\cite{goldberg2007improving}, that underpins its robust properties.
    \begin{lemma}
    Assume at most $c$ servers collude in a multi-server PIR protocol. Using BR-PIR, if a client receives $k$ responses and $h > \sqrt{kc}$ of them are honest, the BR-PIR protocol can return $(G_h, B_\beta)$, where $G_h$ is the set of honest servers and $B_\beta$ is the correct block.
    \label{lemma:robustness}
    \end{lemma}
    In MPIR-DSN, there are $N=3f+1$ miners and at most $f$ corrupt miners who may return incorrect responses. According to Lemma~\ref{lemma:robustness}, in the worst case, there are $c=(N-1)/3$ colluding miners, clients receive $k=N$ responses and $h=(2N+1)/3$ of them are honest. Correctness under the BR-PIR protocol requires $(2N+1)/3 > \sqrt{N(N-1)/3}$. Since this inequality holds for all $N \geq 1$, clients receive enough honest responses to get $(G_h, B_\beta)$. The complement of $G_h$ is the set of misbehaving miners, and $B_\beta$ is the correct file.  

    For efficiency, when a storage miner stores $n$ files, the ACA vector has length $O(\log n)$, and its highest Merkle tree contains $n/2$ leaves. The longest witness, therefore, contains $O(\log n)$ hashes.
    During upload verification, the verifier computes the value of the corresponding position in the current and previous ACA vector using the witness. Both steps require $O(\log n)$ computations. Traversing the ACA vector and the witness in the second step also needs $O(\log n)$ computations. Thus, the total complexity of upload verification is $O(\log n)$.
    For deletion verification, the verifier first computes the value of the corresponding position in the previous ACA vector using the witness and compares all elements in the current and previous ACA vectors. This step requires $O(\log n)$ computations. The second step, computing the value of the corresponding position in the current ACA vector using the witness, also needs $O(\log n)$ computations. Therefore, the complexity of deletion verification is $O(\log n)$.    
    Both SPIR-DSN and MPIR-DSN invoke the underlying PIR protocol once for retrieval. Since an ACA vector with $O(\log n)$ length corresponds to $O(n)$ leaves, the query vector and the database seen by each miner both have a size of $O(n)$. This results in an $O(n)$ computational complexity per miner for retrieval.
\end{proof}

\section{Evaluation}
\label{sec:evaluation}
\subsection{Implementation and Experiment Setup}
\subsubsection{Implementation}
We implement PIR-DSN on top of Filecoin, a widely used DSN developed in Golang. On the client side, we modify the retrieval module so clients can obtain the target file’s index from the blockchain and use the PIR protocol to generate queries and reconstruct files. We use SealPIR \footnote{https://github.com/microsoft/SealPIR} as our single-server PIR scheme and percy++ \footnote{https://sourceforge.net/projects/percy/files/} for BR-PIR. On the miner side, we modify the storage module to support mapping establishment and deletion, and the retrieval module to handle client queries. We also modify the consensus module to integrate database state replication with HotStuff \footnote{https://github.com/relab/hotstuff}. 

\subsubsection{Experiment Setup}
We compare our system with three widely used industrial DSN systems: Filecoin, Sia, and Storj. The evaluation consists of two parts. In Section~\ref{sec::Single_Operation}, we measure the latency of single basic operations, including upload, deletion, proof verification, and retrieval. We deploy the system on a DELL PowerEdge R740 server. Each DSN is configured with four storage miners and one client. The dataset comprises 300 randomly generated text files, each 7.5 MB in size. To ensure fairness, all miners in each DSN store one copy of each file. 

In Section~\ref{sec::Overall_Performance}, we evaluate the overall throughput and latency of file operations. The four DSNs are deployed on a wide-area network (WAN) consisting of 125 SA5.MEDIUM8 instances. Each instance has a 2-core CPU, 8 GB of memory, a 50 GB SSD, and 100 Mbps bandwidth. Of the 125 instances, 100 as storage miners and the remaining 25 as clients. When testing the MPIR-DSN protocol, the 100 storage miners are divided into 25 subnets, each consisting of four miners. The dataset remains the same as in the first part. The number of operations sent by each client varies from one to five per second, with each request targeting either four miners or a subnet in MPIR-DSN. During file retrieval, each storage miner in the four DSNs stores 50 files.

\subsection{Latency of Single Operations}
\label{sec::Single_Operation}

%We first measure the time cost of file upload, deletion, proof verification, and retrieval. PIR-DSN, Filecoin, Sia, and Storj are deployed on a DELL PowerEdge R740 server running Ubuntu 22.04 LTS, with two 12-core CPUs, 16GB memory, and a 300GB SSD. Each DSN is set up with four storage miners and one client. The dataset includes 300 randomly generated text files of 7.5MB each. For fairness, all miners in the four DSNs store one copy of each file and generate proofs for all data in each file.
%We first measure the time cost of file upload, deletion, proof verification, and retrieval. We deploy PIR-DSN, Filecoin, Sia, and Storj on a DELL PowerEdge R740 server running Ubuntu 22.04 LTS. The server is configured with two 12-core CPUs, 16GB of memory, and a 300GB SSD. Each DSN is configured with four storage miners and one client. Our dataset consists of 300 randomly generated text files, each 7.5MB in size. For fairness, all miners in four DSNs store a single copy of each file and generate proofs for all data within each file. 

%Since this part of the experiment is less sensitive to bandwidth, we conduct it locally.
\subsubsection{File Upload} As depicted in Fig.~\ref{File Upload}, the upload latency of SPIR-DSN and MPIR-DSN closely matches those of Filecoin, indicating that our system introduces minimal overhead. These overheads are negligible compared to the 33 seconds needed for generating storage proofs in our system and Filecoin. 
The upload time of Sia and Storj is about 7.53 seconds and 18.96 seconds, respectively. As Sia implements its Merkle tree with 64-byte leaves (two times larger than PIR-DSN and Filecoin)~\cite{sia}, Storj generates storage proofs by directly auditing file blocks~\cite{storj}, these systems improve the efficiency of proof generation but have weak security strength. 

\subsubsection{File Deletion} As shown in Fig.~\ref{File Deletion}, the deletion latency of SPIR-DSN is slightly larger than Filecoin, primarily due to the overhead introduced by mapping deletion. MPIR-DSN experiences a deletion latency longer than SPIR-DSN and Filecoin because of database state replication. Overall, our system introduces minimal additional overhead during the file deletion process. 
The deletion latency of Sia is about 0.023 seconds and is marked with a dotted line. This is because its deletion process does not involve actual file deletion by storage miners; instead, it only removes the FID from the client’s side. Storj relies on a satellite node to relay deletion requests to storage miners, taking approximately 0.136 seconds to delete a file.
%As an evidence, we found that the data storage of all storage miners before and after Sia executed a file deletion operation ramain the same. 

\subsubsection{Proof Verification} Fig.~\ref{Proof Verification} illustrates the proof verification latency during file upload for all four systems, as well as during file deletion in our system. Our system achieves the shortest verification latency during file deletion, indicating minimal overhead for deletion verification. The upload verification latency of our system is slightly longer than Filecoin, as both storage proofs and mapping establishment proofs need to be verified. The overhead of mapping establishment verification remains minimal.
Sia and Storj partition files into multiple 256 KB sectors, requiring multiple proofs to verify a single file. The verification latency of Sia and Storj is about 7.59 milliseconds and 11.63 seconds, respectively. 

\subsubsection{File Retrieval} As depicted in Fig.~\ref{File Retrieval}, the retrieval latency of SPIR-DSN increases linearly. This is because storage miners process the entire database when executing the $\mathsf{Answer}$ algorithm of PIR protocols, resulting in computation proportional to the number of stored files. The MPIR-DSN has low retrieval latency becasue of distributed services and getting rid of the computationally intensive homomorphic multiplication in SealPIR. 
Sia takes about 1.35 seconds to retrieve a file. Storj requires about 4.47 seconds to retrieve a file. This discrepancy arises from Storj's use of AES-GCM encryption with a block size of 1 KB~\cite{storj}, which delays decryption and increases overhead. 

%~\footnote{As evidence, we set the block size to 1 MB and the retrieval time reduced to 1.08 seconds.}

\subsection{Overall Throughput and Latency}
\label{sec::Overall_Performance}
%We then evaluate the throughput and latency of file upload, deletion, and retrieval. We deploy PIR-DSN, Filecoin, Sia, and Storj on a wide area network (WAN) of 125 SA5.MEDIUM8 instances running Ubuntu 22.04 LTS, each with a 2-core CPU, 8 GB of memory, a 50 GB SSD, and 100 Mbps bandwidth. Of these instances, 100 as storage miners, while the other 25 as clients. When testing the MPIR-DSN protocol, we divide the 100 storage miners into 25 subnets, each containing 4 miners. Our dataset remains the same as in the first part. We vary the number of requests sent by a client from 1 to 5 per second, with each request sent either to 4 miners or to a subnet in MPIR-DSN. Additionally, when retrieving files, all storage miners in the four DSNs store 50 files.
These experiments evaluate the overall throughput and latency performance of PIR-DSN under various conditions.

\subsubsection{Upload Throughput and Latency} 
As depicted in Fig.~\ref{Upload Throughput} and Fig.~\ref{Upload Latency}, the throughput of our system is slightly lower and the latency is slightly higher than Filecoin, indicating minimal overhead during mapping establishment. However, due to the computationally intensive sealing process~\cite{filecoin}, the throughput of our system and Filecoin converge when a storage miner processes five files simultaneously. 
Sia exhibits linear throughput growth and stable latency, demonstrating good scalability. Storj’s throughput remains constant while its latency increases linearly.

\begin{figure}[!htbp]
\centering
\subfigure[Throughput]{
\label{Upload Throughput}
\includegraphics[width=0.46\linewidth]{./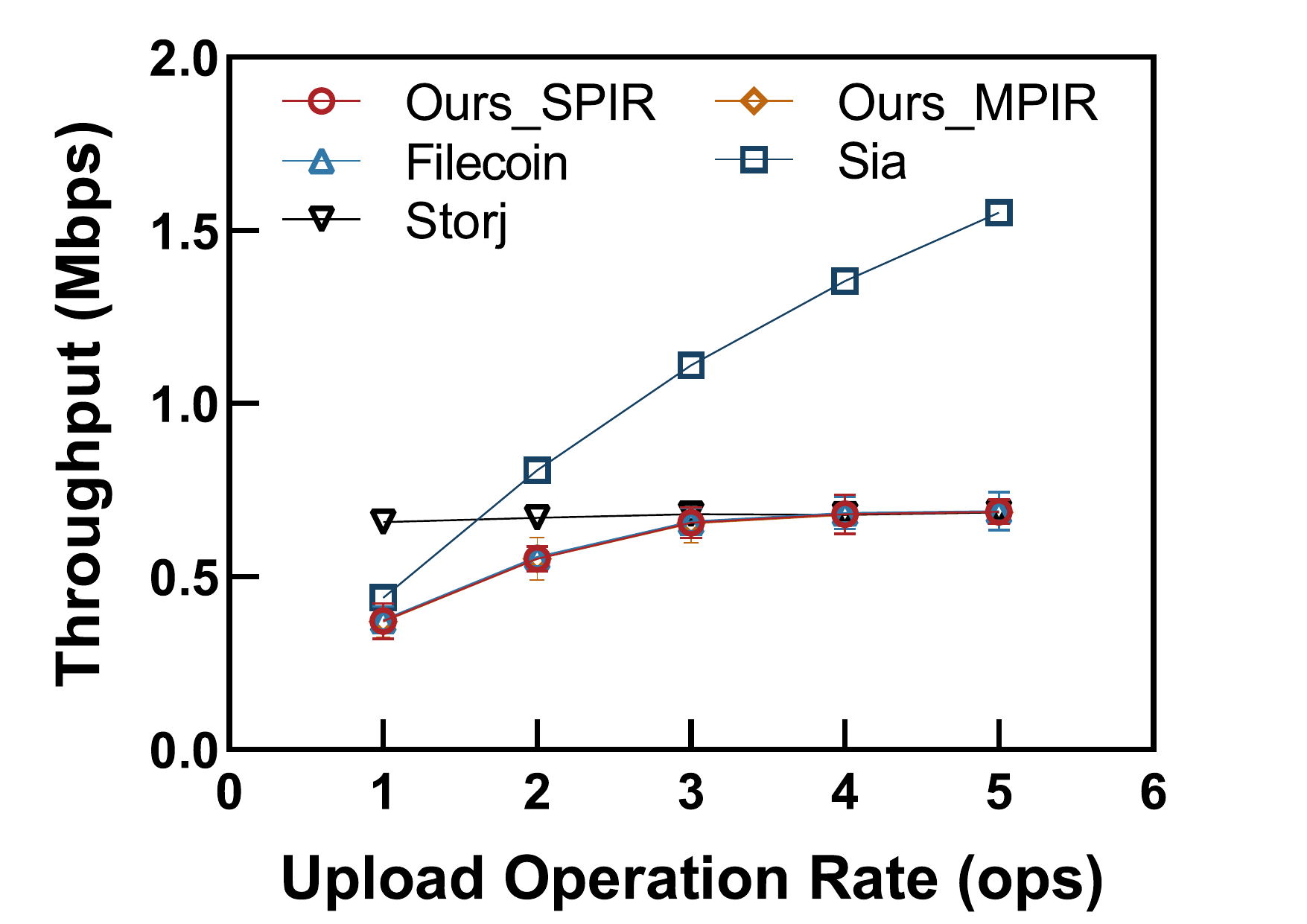}}
% \quad
\subfigure[Latency]{
\label{Upload Latency}
\includegraphics[width=0.46\linewidth]{./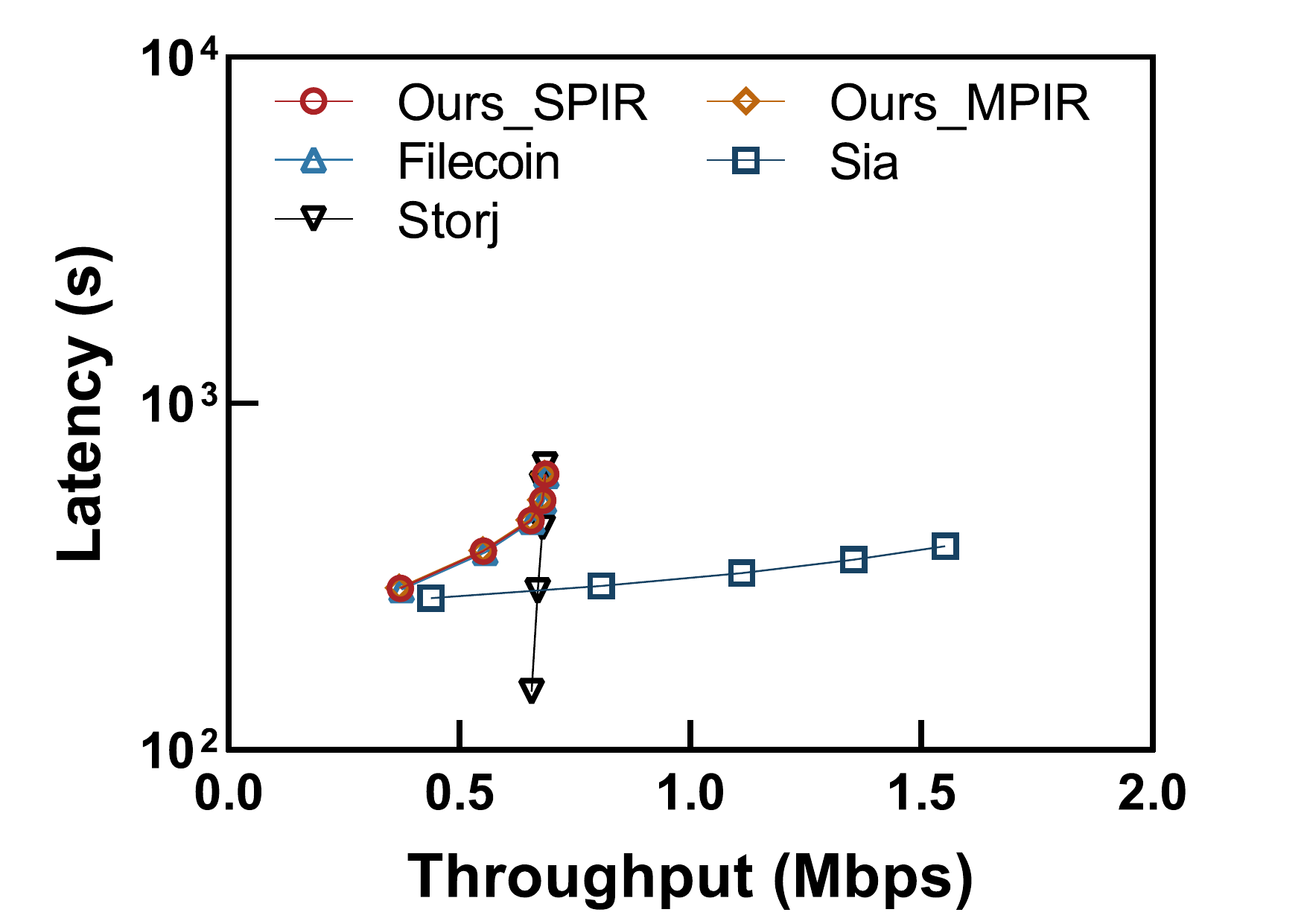}}
\caption{The throughput and latency of file upload operation}
\label{Multi Upload}
\end{figure}

\subsubsection{Deletion Throughput and Latency} As depicted in Fig.~\ref{Deletion Throughput} and Fig.~\ref{Deletion Latency}, the throughput of our system scales linearly and is slightly lower than Filecoin, while the latency remains nearly constant and slightly higher than Filecoin. This suggests that the file deletion process of our system has good scalability, and the mapping deletion process introduces minimal overhead. 
Storj relies on a satellite node to manage deletion requests, resulting in constant throughput and linearly increasing latency. Sia is excluded from this experiment since it does not perform actual file deletions at storage miners. 

\begin{figure}[!htbp]
\centering
\subfigure[Throughput]{
\label{Deletion Throughput}
\includegraphics[width=0.46\linewidth]{./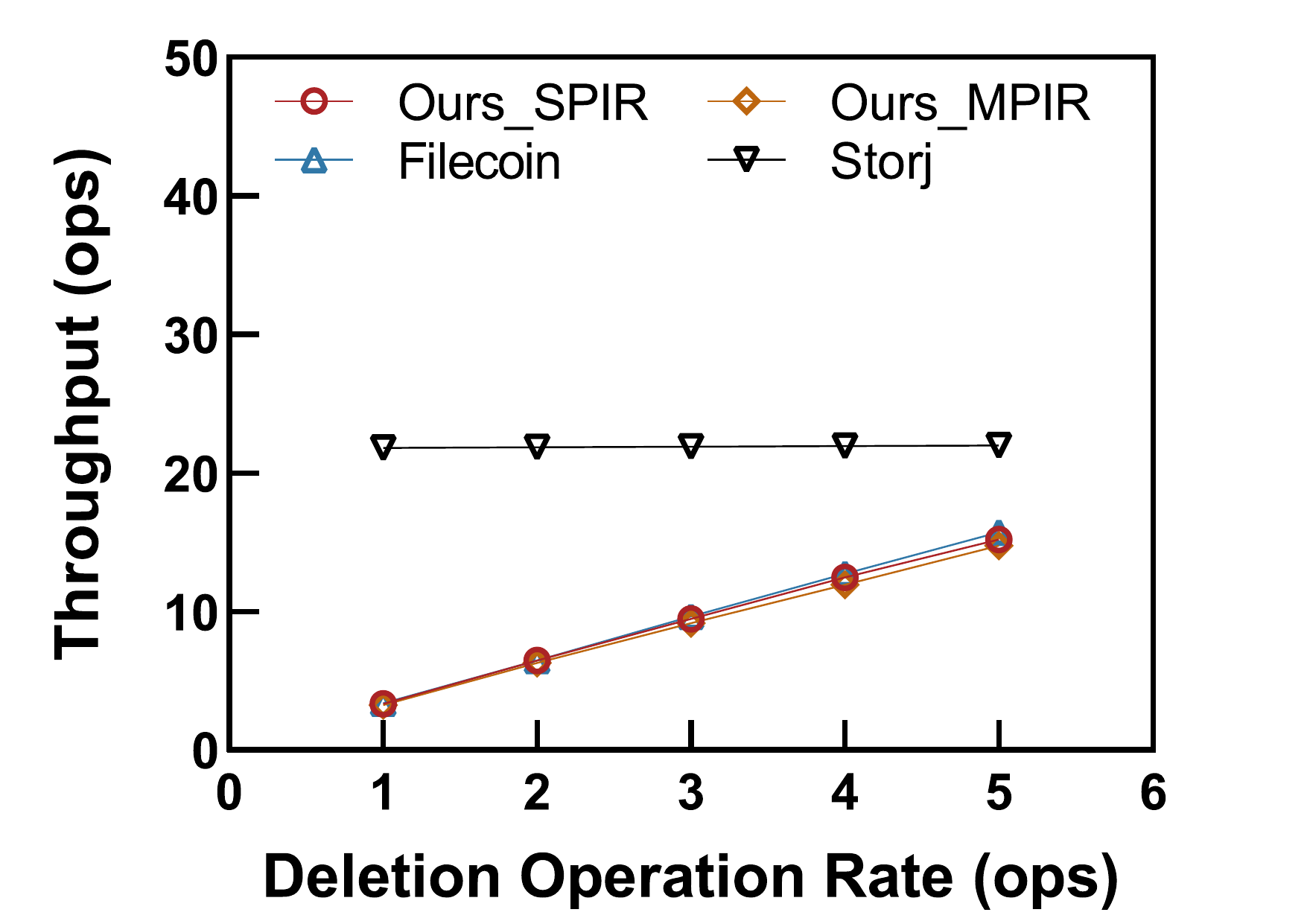}}
% \quad
\subfigure[Latency]{
\label{Deletion Latency}
\includegraphics[width=0.46\linewidth]{./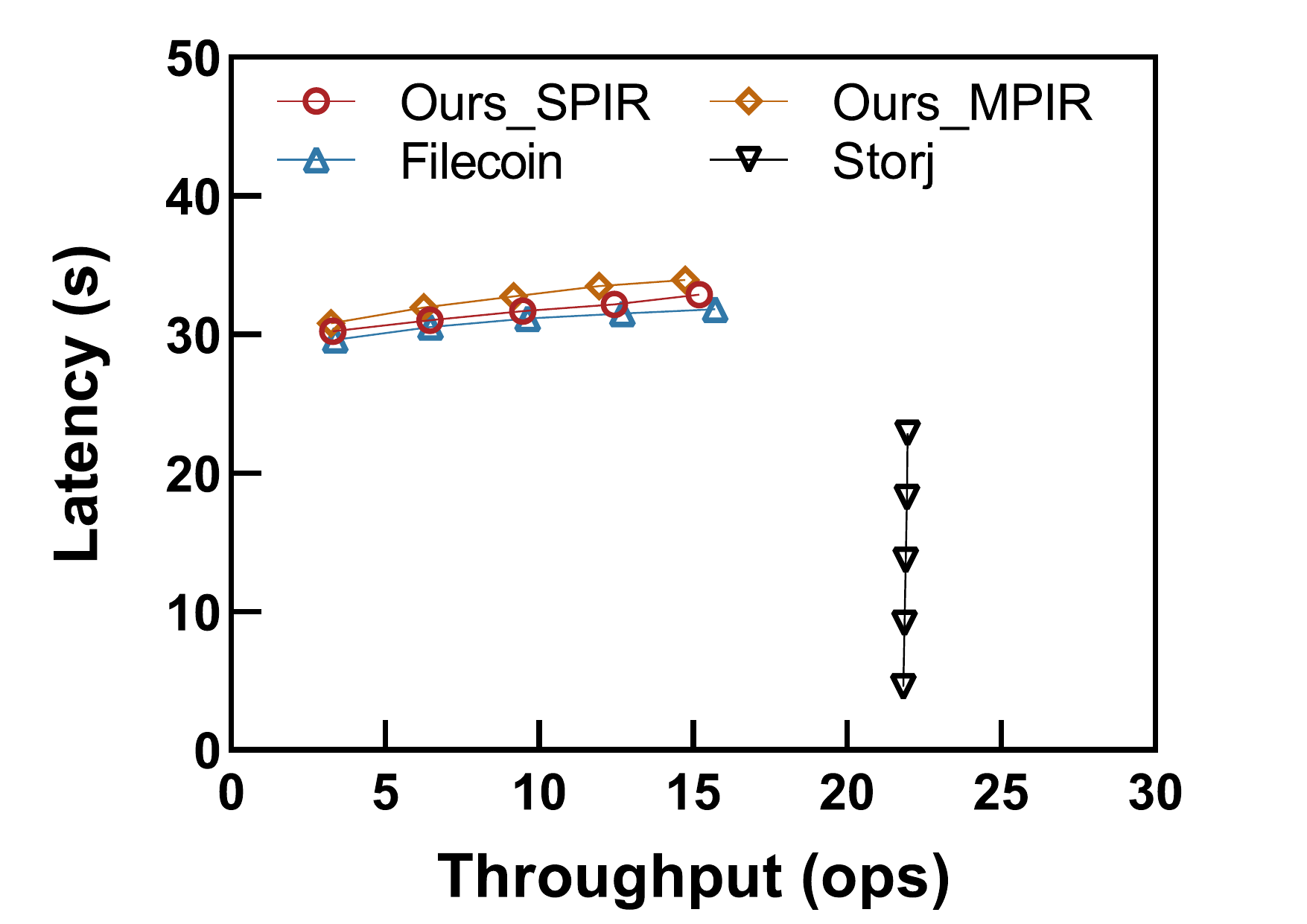}}
\caption{The throughput and latency of file deletion operation}
\label{Multi Deletion}
\end{figure}

\subsubsection{Retrieval Throughput and Latency} 
As depicted in Fig.~\ref{Retrieval Throughput} and Fig.~\ref{Retrieval Latency}, MPIR-DSN achieves slightly lower throughput and higher latency than Filecoin due to the extra computation of multi-server PIR, while SPIR-DSN shows even lower throughput and higher latency due to heavier overhead from single-server PIR.
Sia achieves a throughput of 32.78 Mbps, with latency increasing from 1.86 to 10.13 seconds. Storj reaches a throughput of 14.77 Mbps, with latency growing from 17.26 to 77.49 seconds.

\begin{figure}[!htbp]
\centering
\subfigure[Throughput]{
\label{Retrieval Throughput}
\includegraphics[width=0.46\linewidth]{./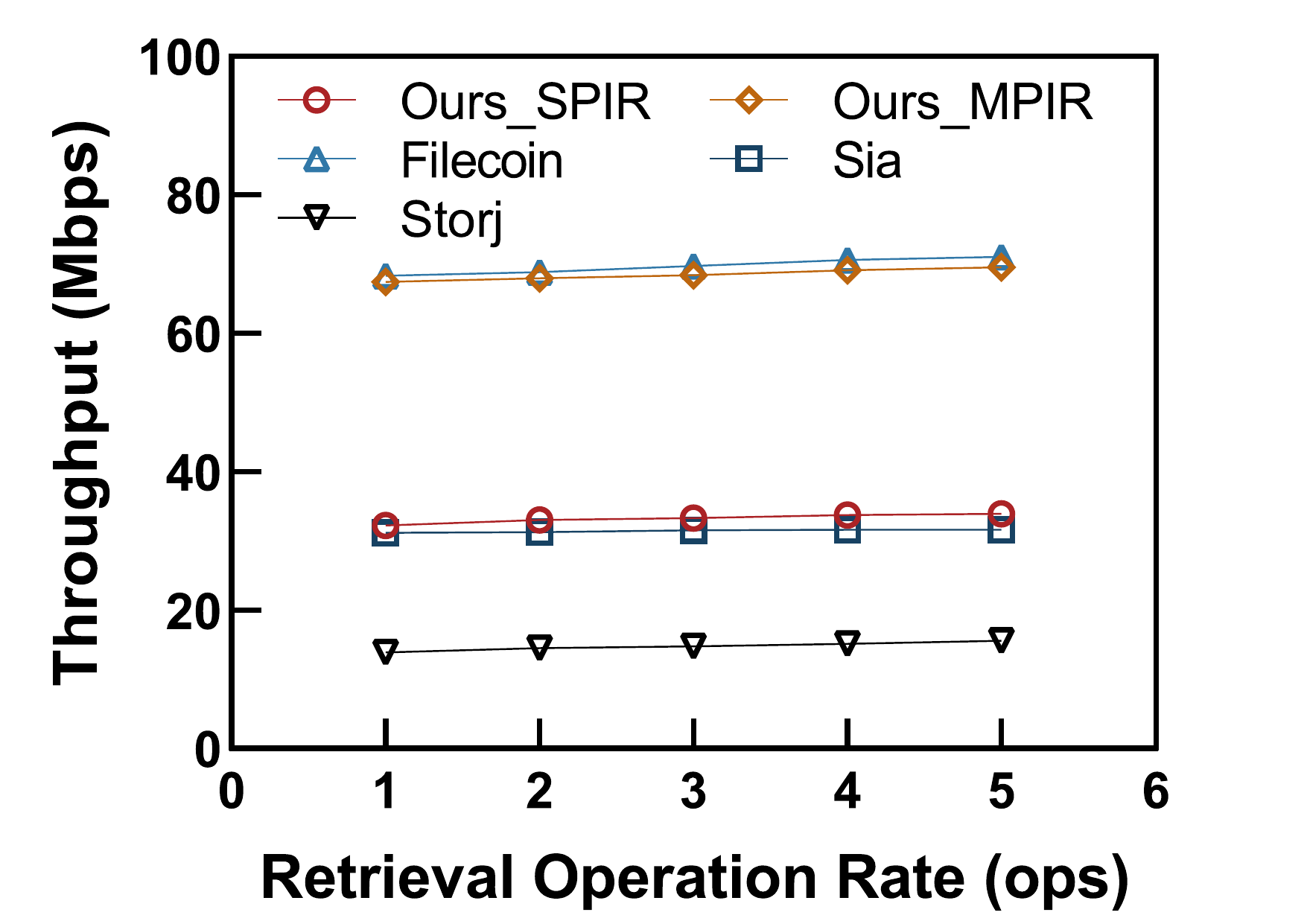}}
% \quad
\subfigure[Latency]{
\label{Retrieval Latency}
\includegraphics[width=0.46\linewidth]{./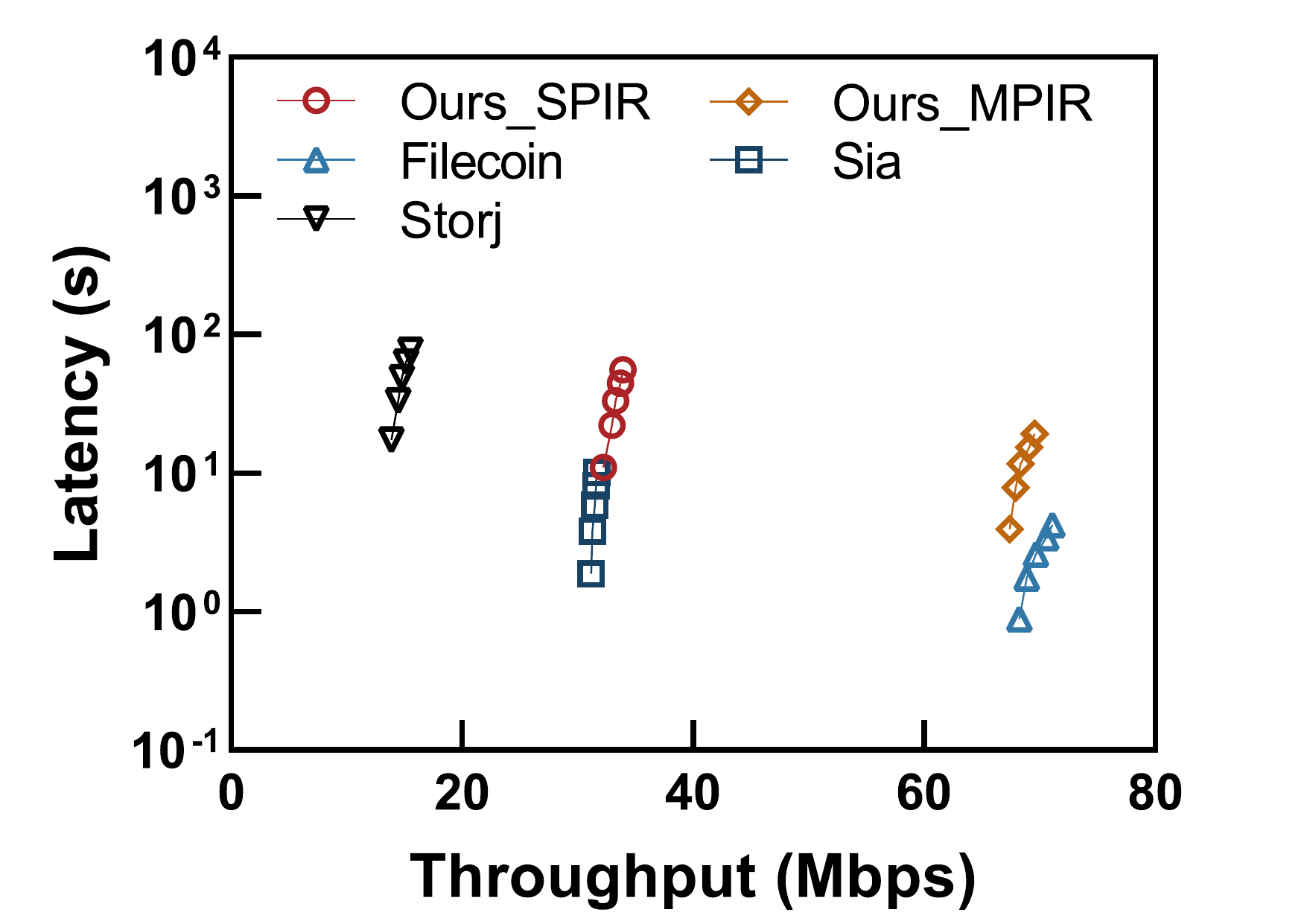}}
\caption{The throughput and latency of file retrieval operation}
\label{Multi Retrieval}
\end{figure}

\section{Conclusion}
We proposed PIR-DSN, the first DSN that supports private file retrieval. Leveraging an ACA, PIR-DSN compresses FIDs into varifiable contiguous indexes and enables private retrieval from both single and multiple miners. Our implementation shows PIR-DSN matches state-of-the-art DSNs in upload and deletion operations, while retrieval incurs extra cost from PIR computation. In future research, we aim to further improve the privacy and efficiency of operations in DSNs.

%In this paper, we present the design and implementation of PIR-DSN, the first DSN that enables private file retrieval. PIR-DSN maps sparse FIDs to compact and verifiable indexes through ACA and ensures private file retrieval from a single miner, as well as robust and private file retrieval from multiple miners. We implement PIR-DSN and conduct a comprehensive performance evaluation. The results indicate that PIR-DSN performs comparably to state-of-the-art DSNs in file upload and deletion but incurs additional overhead during the file retrieval process due to the computational cost using the PIR protocol. In future research, we aim to explore efficiency optimization techniques for PIR-based file retrieval in DSNs to further enhance the retrieval performance.

%\section{Acknowledgement}
%This study was partially supported by the National Key R\&D Program of China (No.2022YFB4501000), the National Natural Science Foundation of China (No.62232010, 62302266), Shandong Science Fund for Excellent Young Scholars (No.2023HWYQ-008), Shandong Science Fund for Key Fundamental Research Project (ZR2022ZD02), and the Fundamental Research Funds for the Central Universities.

\section{Acknowledgment}
This study was supported by the National Key R\&D Program of China (No.2023YFB2703600), the National Natural Science Foundation of China (No. 62302266, 62232010, U23A20302, U24A20244), and the Shandong Science Fund for Excellent Young Scholars (No.2023HWYQ-008), 2025 Shandong Cyberspace Administration Reform Pilot Projects.

% \newpage

\bibliographystyle{IEEEtran}
\bibliography{reference}

@String{JACM = "J. ACM" }

@String{Computing = "Computing" }

@String{Computer = "{IEEE} Computer" }

@String{Springer = "Springer-Verlag" }

@online{filecoin,
	title={Filecoin: A Decentralized Storage Network},
	year={2017},
	author={Protocol Labs}
}

@MISC{storj,
	author = {Shawn Wilkinson and Tome Boshevski and Josh Brandoff and Vitalik Buterin},
	title = {Storj A Peer-to-Peer Cloud Storage Network},
	year = {2014}
}

@article{sia,
	title={Sia: Simple decentralized storage},
	author={Vorick, David and Champine, Luke},
	journal={Retrieved May},
	volume={8},
	pages={2018},
	year={2014}
}

@online{swarm,
	author = {the Swarm team},
	title = {Swarm: Storage and Communication Infrastructure for a Self-Sovereign Digital Society},
	year = {2021},
	url = {https://www.ethswarm.org/swarm-whitepaper.pdf}
}

@inproceedings{APIR,
  title={Authenticated private information retrieval},
  author={Colombo, Simone and Nikitin, Kirill and Corrigan-Gibbs, Henry and Wu, David J and Ford, Bryan},
  booktitle={32nd USENIX Security Symposium (USENIX Security 23)},
  pages={3835--3851},
  year={2023}
}

@inproceedings{Popcorn,
  title={Scalable and private media consumption with Popcorn},
  author={Gupta, Trinabh and Crooks, Natacha and Mulhern, Whitney and Setty, Srinath and Alvisi, Lorenzo and Walfish, Michael},
  booktitle={13th USENIX Symposium on Networked Systems Design and Implementation (NSDI 16)},
  pages={91--107},
  year={2016}
}

@inproceedings{AsyncACC,
  title={Efficient asynchronous accumulators for distributed PKI},
  author={Reyzin, Leonid and Yakoubov, Sophia},
  booktitle={Security and Cryptography for Networks: 10th International Conference, SCN 2016, Amalfi, Italy, August 31--September 2, 2016, Proceedings 10},
  pages={292--309},
  year={2016},
  organization={Springer}
}

@article{chor1998private,
  title={Private information retrieval},
  author={Chor, Benny and Kushilevitz, Eyal and Goldreich, Oded and Sudan, Madhu},
  journal={Journal of the ACM (JACM)},
  volume={45},
  number={6},
  pages={965--981},
  year={1998},
  publisher={ACM New York, NY, USA}
}

@article{pir,
  title={Private information retrieval},
  author={Chor, Benny and Kushilevitz, Eyal and Goldreich, Oded and Sudan, Madhu},
  journal={Journal of the ACM (JACM)},
  volume={45},
  number={6},
  pages={965--981},
  year={1998},
  publisher={ACM New York, NY, USA}
}

@article{xpir,
  title={XPIR: Private information retrieval for everyone},
  author={Melchor, Carlos Aguilar and Barrier, Joris and Fousse, Laurent and Killijian, Marc-Olivier},
  journal={Proceedings on Privacy Enhancing Technologies},
  pages={155--174},
  year={2016}
}

@inproceedings{sealpir,
  title={PIR with compressed queries and amortized query processing},
  author={Angel, Sebastian and Chen, Hao and Laine, Kim and Setty, Srinath},
  booktitle={2018 IEEE symposium on security and privacy (SP)},
  pages={962--979},
  year={2018},
  organization={IEEE}
}

@inproceedings{simplepir,
  title={One Server for the Price of Two: Simple and Fast $\{$Single-Server$\}$ Private Information Retrieval},
  author={Henzinger, Alexandra and Hong, Matthew M and Corrigan-Gibbs, Henry and Meiklejohn, Sarah and Vaikuntanathan, Vinod},
  booktitle={32nd USENIX Security Symposium (USENIX Security 23)},
  pages={3889--3905},
  year={2023}
}

@article{frodopir,
  title={Frodopir: Simple, scalable, single-server private information retrieval},
  author={Davidson, Alex and Pestana, Gon{\c{c}}alo and Celi, Sof{\'\i}a},
  journal={Proceedings on Privacy Enhancing Technologies},
  year={2023}
}

@inproceedings{ali2021communication,
  title={$\{$Communication--Computation$\}$ Trade-offs in $\{$PIR$\}$},
  author={Ali, Asra and Lepoint, Tancr{\`e}de and Patel, Sarvar and Raykova, Mariana and Schoppmann, Phillipp and Seth, Karn and Yeo, Kevin},
  booktitle={30th USENIX Security Symposium (USENIX Security 21)},
  pages={1811--1828},
  year={2021}
}

@article{celi2024call,
  title={Call Me By My Name: Simple, Practical Private Information Retrieval for Keyword Queries},
  author={Celi, Sof{\'\i}a and Davidson, Alex},
  journal={Cryptology ePrint Archive},
  year={2024}
}

@article{chor1997private,
  title={Private information retrieval by keywords},
  author={Chor, Benny and Gilboa, Niv and Naor, Moni},
  year={1997},
  publisher={Citeseer}
}

@article{mazmudar2024peer2pir,
  title={Peer2PIR: Private Queries for IPFS},
  author={Mazmudar, Miti and Veitch, Shannon and Mahdavi, Rasoul Akhavan},
  journal={arXiv preprint arXiv:2405.17307},
  year={2024}
}

@article{beimel2007robust,
  title={Robust information-theoretic private information retrieval},
  author={Beimel, Amos and Stahl, Yoav},
  journal={Journal of Cryptology},
  volume={20},
  pages={295--321},
  year={2007},
  publisher={Springer}
}

@INPROCEEDINGS{FileDES,
  author={Xu, Minghui and Zhang, Jiahao and Guo, Hechuan and Cheng, Xiuzhen and Yu, Dongxiao and Hu, Qin and Li, Yijun and Wu, Yipu},
  booktitle={IEEE INFOCOM 2024 - IEEE Conference on Computer Communications}, 
  title={FileDES: A Secure, Scalable and Succinct Decentralized Encrypted Storage Network}, 
  year={2024},
  volume={},
  number={},
  pages={261-270},
  keywords={Wide area networks;Privacy;Data privacy;Protocols;Costs;Scalability;Memory;Decentralized storage network;blockchain;data sharing;proof of storage;scalability;Sybil attacks},
  doi={10.1109/INFOCOM52122.2024.10621113}}

@article{wang2018blockchain,
  title={A blockchain-based framework for data sharing with fine-grained access control in decentralized storage systems},
  author={Wang, Shangping and Zhang, Yinglong and Zhang, Yaling},
  journal={Ieee Access},
  volume={6},
  pages={38437--38450},
  year={2018},
  publisher={IEEE}
}

@misc{Opensea,
    title={OpenSea},
    author={Opensea},
    year={2023},
    howpublished = {\url{https://opensea.io/}}
}

@misc{Nuklai,
    title={Nuklai docs},
    author={Nuklai},
    year={2024},
    howpublished = {\url{https://docs.nukl.ai/docs/intro}}
}

@incollection{lamport2019byzantine,
  title={The Byzantine generals problem},
  author={Lamport, Leslie and Shostak, Robert and Pease, Marshall},
  booktitle={Concurrency: the works of leslie lamport},
  pages={203--226},
  year={2019}
}

@article{arweave,
  title={Arweave: A protocol for economically sustainable information permanence},
  author={Williams, Sam and Diordiiev, Viktor and Berman, Lev and Uemlianin, Ivan},
  journal={Arweave Yellow Paper},
  year={2019}
}

@article{filetrust,
  title={FileTrust: A Secure Decentralized System for Privacy-Preserving Data Storage and Sharing},
  author={Farahpoor, Mohammadali},
  year={2024}
}

@inproceedings{mittal2011pirtor,
  title={$\{$PIR-Tor$\}$: Scalable anonymous communication using private information retrieval},
  author={Mittal, Prateek and Olumofin, Femi and Troncoso, Carmela and Borisov, Nikita and Goldberg, Ian},
  booktitle={20th USENIX security symposium (USENIX security 11)},
  year={2011}
}

@inproceedings{liu2024periscoping,
  title={Periscoping: Private Key Distribution for Large-Scale Mixnets},
  author={Liu, Shuhao and Chen, Li and Fu, Yuanzhong},
  booktitle={IEEE INFOCOM 2024-IEEE Conference on Computer Communications},
  pages={681--690},
  year={2024},
  organization={IEEE}
}

@inproceedings{hotstuff,
  title={HotStuff: BFT consensus with linearity and responsiveness},
  author={Yin, Maofan and Malkhi, Dahlia and Reiter, Michael K and Gueta, Guy Golan and Abraham, Ittai},
  booktitle={Proceedings of the 2019 ACM Symposium on Principles of Distributed Computing},
  pages={347--356},
  year={2019}
}

@inproceedings{goldberg2007improving,
  title={Improving the robustness of private information retrieval},
  author={Goldberg, Ian},
  booktitle={2007 IEEE Symposium on Security and Privacy (SP'07)},
  pages={131--148},
  year={2007},
  organization={IEEE}
}

@article{livepeer,
  title={Livepeer whitepaper},
  author={Petkanics, Doug and Tang, Eric},
  journal={Technical report, Livepeer},
  year={2018}
}

@misc{saturn,
  title        = "{Saturn}",
  author       = "{Saturn (Organization)}",
  howpublished = "\url{https://saturn.tech/}",
  note         = "[Accessed 14-Mar-2025]"
}

@misc{storacha,
  title        = "{Storacha Documentation}",
  author       = "{Storacha (Organization)}",
  howpublished = "\url{https://docs.storacha.network/}",
  note         = "[Accessed 14-Mar-2025]"
}

@article{damgaard2024severe,
  title={A severe Goldfinger attack vector on Proof-of-Work blockchains},
  author={Damgaard, Mads J},
  year={2024}
}

@inproceedings{prunster2022total,
  title={Total eclipse of the heart--disrupting the $\{$InterPlanetary$\}$ file system},
  author={Pr{\"u}nster, Bernd and Marsalek, Alexander and Zefferer, Thomas},
  booktitle={31st USENIX Security Symposium (USENIX Security 22)},
  pages={3735--3752},
  year={2022}
}

@article{tajeddine2019private,
  title={Private information retrieval from coded storage systems with colluding, Byzantine, and unresponsive servers},
  author={Tajeddine, Razane and Gnilke, Oliver W and Karpuk, David and Freij-Hollanti, Ragnar and Hollanti, Camilla},
  journal={IEEE Transactions on information theory},
  volume={65},
  number={6},
  pages={3898--3906},
  year={2019},
  publisher={IEEE}
}

@article{sridhar2023content,
  title={Content censorship in the interplanetary file system},
  author={Sridhar, Srivatsan and Ascigil, Onur and Keizer, Navin and Genon, Fran{\c{c}}ois and Pierre, S{\'e}bastien and Psaras, Yiannis and Rivi{\`e}re, Etienne and Kr{\'o}l, Micha{\l}},
  journal={arXiv preprint arXiv:2307.12212},
  year={2023}
}

@inproceedings{ecchain,
  title={EC-Chain: Cost-Effective Storage Solution for Permissionless Blockchains},
  author={Xu, Minghui and Guo, Hechuan and Cheng, Ye and Liu, Chunchi and Yu, Dongxiao and Cheng, Xiuzhen},
  booktitle={IEEE INFOCOM 2025-IEEE Conference on Computer Communications},
  pages={1--10},
  year={2025},
  organization={IEEE}
}

@article{bftdsn,
  title={BFT-DSN: A Byzantine fault-tolerant decentralized storage network},
  author={Guo, Hechuan and Xu, Minghui and Zhang, Jiahao and Liu, Chunchi and Ranjan, Rajiv and Yu, Dongxiao and Cheng, Xiuzhen},
  journal={IEEE Transactions on Computers},
  volume={73},
  number={5},
  pages={1300--1312},
  year={2024},
  publisher={IEEE}
}

@article{dsnsok,
  title={SoK: Decentralized storage network},
  author={Li, Chuanlei and Xu, Minghui and Zhang, Jiahao and Guo, Hechuan and Cheng, Xiuzhen},
  journal={High-Confidence Computing},
  volume={4},
  number={3},
  pages={100239},
  year={2024},
  publisher={Elsevier}
}

@article{filedag,
  title={Filedag: A multi-version decentralized storage network built on dag-based blockchain},
  author={Guo, Hechuan and Xu, Minghui and Zhang, Jiahao and Liu, Chunchi and Yu, Dongxiao and Dustdar, Schahram and Cheng, Xiuzhen},
  journal={IEEE Transactions on Computers},
  volume={72},
  number={11},
  pages={3191--3202},
  year={2023},
  publisher={IEEE}
}

\end{document}

%%%%%%%%%%%%%%%%%%%%%%%%%%%%%\subsection{Design Goals}
\label{sec:designgoals}
We now describe the design goals of PIR-DSN to address the challenges posed by adversaries.

\textbf{Public Verfiability.} The PIR-DSN protocol is \textit{public verifiable} if, for any file upload or deletion process, any verifier $\mathcal{V}$ running in polynomial time, any file upload process, any client $C$, any storage miner $SM \in \vec{SM}$, and any input file $F$, it holds that: 
\begin{equation}
\begin{aligned}
&
P\left[
    \mathcal{V}(\pi^{u},\mathsf{FID},\mathsf{index}) \rightarrow 1 \,
\middle|\,
    \begin{aligned}
        &C.\mathsf{Upload}(F, \vec{SM}) \rightarrow \mathsf{FID}, \\
        &SM.\mathsf{Upload}(F,\mathsf{Aca}) \rightarrow \mathsf{index}, \pi^{u}
    \end{aligned}
\right] \\
\geq & 1 - \mathsf{negl}(\lambda).
\end{aligned}
\label{eq:verifiability1}
\end{equation}

For any file deletion process, and any input $\mathsf{FID}$, it holds that:
\begin{equation}
\begin{aligned}
&
P\left[
    \mathcal{V}(\pi^{d},\mathsf{FID},\mathsf{index}) \rightarrow 1 \,
\middle|\,
    \begin{aligned}
        &C.\mathsf{Delete}(\mathsf{FID}, \vec{SM}) \rightarrow 1, \\
        &SM.\mathsf{Delete}(\mathsf{FID},\mathsf{Aca}) \rightarrow \mathsf{index}, \pi^{d}
    \end{aligned}
\right]\\ 
\geq & 1-\mathsf{negl}(\lambda).
\end{aligned}
\label{eq:verifiability2}
\end{equation}

\textbf{Privacy.} The PIR-DSN protocol is \textit{private} if for an adversary $\mathcal{A}$ running in polynomial time, any client uses $\mathsf{Retrieve_{S}}$ to retrieve files from any storage miner $SM$, and any $\mathsf{FID}_i$, $\mathsf{FID}_j$ whose index is $\mathsf{index}_i$ and $\mathsf{index}_j$ in $SM$, it holds that:
\begin{equation}
\begin{aligned}
&
|P\left[
    \mathcal{A}(\vec{q}) \rightarrow 1 \,
\middle|\,
    \mathsf{PIR_{S}.Query}(1^\lambda, \mathsf{index}_i, pk, s) \rightarrow \vec{q}
\right]\\
-&P\left[
    \mathcal{A}(\vec{q}) \rightarrow 1 \,
\middle|\,
    \mathsf{PIR_{S}.Query}(1^\lambda, \mathsf{index}_j, pk, s) \rightarrow \vec{q}
\right]| \\
\leq & \mathsf{negl}(\lambda).
\end{aligned}
\label{eq:privacy1}
\end{equation}
For any client uses $\mathsf{Retrieve_{M}}$ to retrieve files from any subset of $p$ storage miners $\vec{SM}$ holding consistent databases, and for any $\mathsf{FID}_i$, $\mathsf{FID}_j$ whose index is $\mathsf{index}_i$ and $\mathsf{index}_j$ in $\vec{SM}$, it holds that:
\begin{equation}
\begin{aligned}
&
|P\left[
    \mathcal{A}(\vec{Q_m}) \rightarrow 1 \,
\middle|\,
    \mathsf{PIR_{M}.Query}(1^\lambda, \mathsf{index}_i, pk, s) \rightarrow \vec{q_1},\vec{q_2},...\vec{q_p}
\right]\\
-&P\left[
    \mathcal{A}(\vec{Q_m}) \rightarrow 1 \,
\middle|\,
    \mathsf{PIR_{M}.Query}(1^\lambda, \mathsf{index}_j, pk, s) \rightarrow \vec{q_1},\vec{q_2},...\vec{q_p}
\right]| \\ 
\leq & \mathsf{negl}(\lambda),
\end{aligned}
\label{eq:privacy2}
\end{equation}
where $\vec{Q_m} \subset \{\vec{q_1},\vec{q_2},...\vec{q_p}\}$ is the set of query vectors observed by malicious storage miners in $\vec{SM}$.

\textbf{Robustness.} The PIR-DSN protocol is \textit{robust} if, let a function $\mathsf{Fid}$ that returns the file identifier given the file content, for any retrieval attempt using $\mathsf{Retrieve_{M}}$, any client retrieves files from any subset of storage miners $\vec{SM}$, any set of malicious storage miners $\vec{SM_m} \subset \vec{SM}$ return incorrect query answers or fail to respond, and for any input $\mathsf{FID}$, it holds that:
\begin{equation}
\begin{aligned}
&
P\left[
     \begin{aligned}
        &\mathsf{Fid}(F) \rightarrow \mathsf{FID}, \\
        &\vec{SM_m'} = \vec{SM_m}
    \end{aligned},
\middle|\, 
    \mathsf{Retrieve_{M}}(\mathsf{FID}, \vec{SM}) \rightarrow F, \vec{SM_m'}
\right] \\ 
\geq & 1-\mathsf{negl}(\lambda).
\end{aligned}
\label{eq:robustness}
\end{equation}

%%%%%%%%%%%%%%%%%%%%%%%%\section{Security Analysis}
In this section, we analyze the security of the PIR-DSN protocol and demonstrate that it satisfies public verifiability, privacy, and robustness.
\begin{theorem}[]
    Assuming the hash function used during the mapping establishment and deletion processes is collision resistant, our mapping method satisfies public verifiability.
\label{theorem:publicVerfiability}
\end{theorem}
\begin{proof}
    In the mapping establishment process, suppose a file identifier $\mathsf{FID_i}$ is inserted into an ACA and assigned $\mathsf{index_i}$. Its valid upload proof is denoted as $\pi^{u} = \langle \vec{v}, r_k, \vec{w}, \mathsf{FID_i}, \mathsf{index_i}, h \rangle$. Now assume a polynomial-time adversary $\mathcal{A}$ attempts to forge an invalid mapping $(\mathsf{FID_j}, \mathsf{index_j})$, two cases arise. First, if the forged mapping does not exist in the ACA, $\mathcal{A}$ must forge a witness for $\mathsf{FID_j}$ to pass the FID existence verification, which contradicts the collision resistance of the hash function. Second, if the forged mapping exists in the ACA and $\mathsf{FID_j} \neq \mathsf{FID_i}$, $\mathcal{A}$ can construct a valid witness for $\mathsf{FID_j}$. In this case, since $\mathsf{index_j}$ is already assigned in the previous ACA, the verifier checks whether $\mathsf{index_j}$ is conflict-free and reveals the adversary's forgery. The deletion process follows similar logic. If the forged mapping does not exist, the verifier fails to validate the existence of $\mathsf{FID_j}$ in the previous ACA. If the forged mapping exists, the verifier can check whether its witness and $\perp$ together can recalculate the value of the corresponding position in the ACA vector to reveal the forgery.
\end{proof}

\begin{theorem}[]
    Assuming that the underlying single-server and multi-server PIR protocols preserve privacy, our constructions of SPIR-DSN and MPIR-DSN achieve privacy.
\label{theorem:privacy}
\end{theorem}
\begin{proof}
    In SPIR-DSN and MPIR-DSN, clients execute the $\mathsf{Query}$ algorithm of the corresponding PIR protocols to generate query vectors and send them to storage miners. In SPIR-DSN, the privacy of the underlying single-server PIR ensures that a polynomial-time adversary cannot infer the index of a file queried by a client with probability better than random guessing. This guarantees the privacy of SPIR-DSN. In MPIR-DSN, clients generate $N$ query vectors and send them to $N$ miners within a subnet. The multi-server PIR guarantees that even if an adversary controls up to $N-1$ miners, it cannot distinguish the queried index with a non-negligible advantage. According to our threat model, a subnet satisfies $N \geq 3f + 1$ to tolerate $f$ malicious miners. Thus, an adversary can control at most $(N-1)/3$ miners, which is less than the $N-1$ threshold required to break the privacy of multi-server PIR.
\end{proof}

To demonstrate robustness, we first present a lemma to show the property of the BR-PIR scheme proposed by Goldberg. This lemma has been formally proved in Goldberg’s paper~\cite{goldberg2007improving}.
\begin{lemma}
    Let $p$ servers share the same database, and at most $c$ of them collude. When a client sends query vectors to these servers and receives $k$ responses using the BR-PIR scheme proposed by Goldberg, if the number of honest responses $h > \sqrt{kc}$, the client can invoke the $\mathsf{Reconstruct}$ algorithm and obtain a pair $(G_h, B_\beta)$, where $G_h$ is the set of honest servers and $B_\beta$ is the correct database block.
\label{lemma:robustness}
\end{lemma}
\begin{theorem}
    If the BR-PIR scheme satisfies Lemma~\ref{lemma:robustness}, and our mapping establishment and deletion method satisfy public verifiability, then MPIR-DSN achieves robustness.
\label{theorem:robustness}
\end{theorem}
\begin{proof}
    In MPIR-DSN, miners use the HotStuff protocol to maintain a consistent database state across honest miners. Clients obtain the file index from upload proofs stored on-chain. Since only verified proofs are recorded, the index is guaranteed to be correct. During retrieval, the client interacts with $N$ miners. An adversary may control up to $(N-1)/3$ miners. Consequently, the client receives between $(2N+1)/3$ and $N$ responses. To guarantee correctness under BR-PIR, the number of honest responses $h$ must exceed $\sqrt{N(N-1)/3}$ in the worst case. Since there are at least $(2N+1)/3$ honest miners in the subnet and $(2N+1)/3 > \sqrt{N(N-1)/3}$ for all $N \geq 0$, the client always receives enough honest responses to reconstruct the file and identify misbehaving miners. Thus, MPIR-DSN satisfies robustness.
\end{proof}

\begin{theorem}[]
    Assuming a storage miner holds $n$ files, PIR-DSN can verify a file upload and deletion both in $O(\log n)$ complexity and can retrieve a file from one or multiple miners with a single PIR invocation, with each miner performing $O(n)$ computation.
\label{theorem:efficiency}
\end{theorem}
\begin{proof}
    When a storage miner stores $n$ files, the ACA vector has length $O(\log n)$, and its highest Merkle tree contains $n/2$ leaves. The longest witness therefore contains $O(\log n)$ hashes.
    During the upload verification process, a verifier computes the value of the corresponding position in the current and previous ACA vector using the witness in the first and third step, both need $O(\log n)$ computations. It traverses the ACA vector and the witness in the second step, which also needs $O(\log n)$ computations. Thus, the complexity of upload verification is $O(\log n)$.
    During the deletion verification process, a verifier computes the value of the corresponding position in the previous ACA vector using the witness and compares all elements in the current and previous ACA vector in the first step, which needs $O(\log n)$ computations. It computes the value of the corresponding position in the current ACA vector using the witness in the second step, which also needs $O(\log n)$ computations. Thus, the complexity of deletion verification is $O(\log n)$.
    
    During retrieval, both SPIR-DSN and MPIR-DSN invoke the underlying PIR protocol once. Since an ACA with $O(\log n)$ length vector contains $O(n)$ leaves, the query vector and the database seen by each miner both have size $O(n)$, yielding an $O(n)$ computational cost per miner.
\end{proof}

%%%%%%%%%%%%%%\subsection{Index-Based PIR or Keyword-Based PIR for DSN}
Index-based and keyword-based PIR protocols are tailored for the databases using different indexing methods. Index-based PIR is usually used for databases with consecutive integer indexes~\cite{chor1998private}. In contrast, keyword-based PIR protocols can directly interface with databases with discrete keywords~\cite{chor1997private}. In our system, we choose to use index-based PIR protocols as they require lower computational cost than keyword-based PIR protocols, and their extra storage cost is acceptable to storage miners.

The computational cost of index-based PIR protocols is lower than keyword-based PIR protocols. Index-based PIR protocols require one multiplication operation on each file in a single round. These protocols encode the index into a one-hot vector, hide the index through information theory~\cite{APIR,beimel2007robust} or homomorphic encryption~\cite{xpir,sealpir}, and privately calculate the product of the vector and the database. In contrast, keyword-based PIR protocols require multiple multiplication operations on each file in multiple rounds. These protocols utilize binary search trees or probabilistic filters for database initialization and employ index-based PIR for querying these structures. These protocols either retrieve a file by querying the binary search tree layer by layer~\cite{chor1997private} or retrieve data from multiple locations in a probabilistic filter to decode the file~\cite{celi2024call, ali2021communication}.

Index-based PIR protocols require additional storage for maintaining mappings that link discrete FIDs (e.g., 256 bits for SHA256) to continuous indexes (32-bit integers). For instance, maintaining mappings for $10^6$ files requires approximately 35MB. Given that storage miners typically have disk sizes over 2TB and memory sizes over 256GB, the additional storage cost is manageable, whether the mappings are stored locally or loaded into memory for constructing the database used for index-based PIR protocols.

%~\footnote{https://lotus.filecoin.io/storage-providers/get-started/hardware-requirements/}

%%%%%%%%%%%%detail of PIR and ACA
\begin{itemize}
	\item $\mathsf{Query}(1^\lambda, i, pk, s) \rightarrow \vec{q}$: Given a security parameter $\lambda$, an index of the query record $i$, a public key $pk$ and a database size $s$, a client generates a query vector $\vec{q}=\{ e_0,e_1,...e_s \}$. Each element $e_j$ is defined as $e_i=\mathsf{Enc}(pk,1)$ if $j = i$ and $e_j=\mathsf{Enc}(pk,0)$ otherwise. The client then sends $\vec{q}$ to a database server.
	\item $\mathsf{Answer}(\vec{q}, DB) \rightarrow \mathsf{ans}$: A server computes the query answer $\mathsf{ans}$ by performing homomorphic multiplication between $\vec{q}$ and the database $DB$. The resulting ciphertext $\mathsf{ans}$ is returned to the client.
	\item $\mathsf{Decrypt}(sk, \mathsf{ans}) \rightarrow x_i$: The client decrypts $\mathsf{ans}$ using the secret key $sk$ to obtain the $i$-th record $x_i$ of the database. 
\end{itemize}

\begin{itemize}
	\item $\mathsf{Query}(1^\lambda, i, s, p) \rightarrow \vec{q_1},\vec{q_2}, \dots ,\vec{q_p}$: Given a security parameter $\lambda$, an index of the query record $i$, the database size $s$, and the number of servers $p$, a client generates $p$ query vectors $\Vec{q_1}, \Vec{q_2}, \dots, \Vec{q_p}$, sending each to a server.
	\item $\mathsf{Answer}(\Vec{q_j}, DB)\rightarrow \mathsf{ans}_i$: A server compute query answer $\mathsf{ans}_j$ by applying the received query vector $\Vec{q_j}$ to the database $DB$. The resulting query answer $\mathsf{ans}_j$ is returned to the client.
	\item $\mathsf{Reconstruct}(\mathsf{ans}_1, \mathsf{ans}_2, \dots ,\mathsf{ans}_p) \rightarrow x_i$: The client combines the $p$ query answers $\mathsf{ans}_1, \mathsf{ans}_2, \dots, \mathsf{ans}_p$ to reconstruct the $i$-th record $x_i$.
\end{itemize}

\begin{itemize}
	\item $\mathsf{ACA.Gen}(1^k) \rightarrow \mathsf{Aca}_\perp$: Given a security parameter $k$, the algorithm initialize an empty accumulator $\mathsf{Aca}_\perp$.
	\item $\mathsf{ACA.Insert}(\mathsf{Aca}, x, \mathsf{leaf}) \rightarrow \mathsf{Aca'}, \Vec{w}$: This algorithm inserts an element $x$ into an empty leaf $\mathsf{leaf}$ in $\mathsf{Aca}$. If $\mathsf{node} = \perp$, indicating there are no empty leaves, then $\mathsf{Aca}$ is extended to accommodate $x$. The algorithm returns the updated accumulator, denoted $\mathsf{Aca'}$, and the membership witness $\Vec{w}$ for $x$ in $\mathsf{Aca'}$.
        \item $\mathsf{ACA.Delete}(\mathsf{Aca}, x) \rightarrow \mathsf{Aca'}, \Vec{w}$: This algorithm delete an existing element $x$ from $\mathsf{Aca}$. It returns the updated ACA, denoted as $\mathsf{Aca'}$, and the membership witness $\Vec{w}$ of $x$ in $\mathsf{Aca}$. 
        \item $\mathsf{ACA.VerMem}(\mathsf{Aca}, \Vec{w}, x) \rightarrow \{0,1\}$: This algorithm verifies the presence of an element $x$ in $\mathsf{Aca}$ using its membership witness $\Vec{w}$. It returns a boolean value indicating whether the verification is successful.
\end{itemize}

%%%%%%%%%%%%%%%%%%%%%\section{Technical Overview}
Fig.~\ref{system architecture} illustrates the system architecture. PIR-DSN mainly consists of six key processes: (1) Clients send requests to a single miner in SPIR-DSN or a subnet of miners in MPIR-DSN. (2) Upon receiving an upload or delete request, the miner updates its local storage and sends the file’s FID to the ACA module. (3) The ACA module either establishes or deletes mapping for the file’s FID. (4) The miner generates and submits upload or deletion proofs to the blockchain, ensuring public verifiability of the mapping establishment or mapping deletion operations. (5) Clients access these proofs from the blockchain to confirm the storage state of a file and obtain its index. (6) Using the index, clients generate query vectors, retrieve query answers from one miner in SPIR-DSN or a subnet of miners in MPIR-DSN, and reconstruct files locally.

%\subsection{PIR-DSN Operations}
We now describe the operations of clients and storage miners in SPIR-DSN and MPIR-DSN. Clients can execute four main operations: $\mathsf{Upload}$ and $\mathsf{Delete}$ for file management, and $\mathsf{Retrieve_{S}}$ and $\mathsf{Retrieve_{M}}$ for file retrieval using single-server and multi-server PIR, respectively. We describe the four operations below.
    \begin{itemize}
	\item $\mathsf{Upload}(F, \vec{SM}) \rightarrow \mathsf{FID}$: A client use this operation to upload a file $F$ to $\vec{SM}$. $\Vec{SM}$ is a vector containing one or more addresses of storage miners. The client obtains the file's identifier $\mathsf{FID}$. 
        \item $\mathsf{Delete}(\mathsf{FID}, \vec{SM}) \rightarrow \{0,1\}$: A client use this operation to upload a file $F$ to $\vec{SM}$. $\Vec{SM}$ is a vector containing one or more addresses of storage miners. The client obtains a boolean value indicating whether the deletion is successful. 
	\item $\mathsf{Retrieve_{S}}(\mathsf{FID}, SM) \rightarrow \{ F, \perp \}$: A client retrieves a file from a miner $SM$ using single-server PIR. Using $\mathsf{FID}$, the client verifies the integrity of the query result, returning $F$ if the result is valid or $\perp$ otherwise.
        \item $\mathsf{Retrieve_{M}}(\mathsf{FID}, \vec{SM}) \rightarrow F, \vec{SM_m}$ A client retrieves a file from multiple miners in a subnet using multi-server PIR. The client obtains the correct file content $F$ and the addresses of malicious miners $\vec{SM_m}$ who return incorrect query answers or fail to respond.
    \end{itemize}
    
Storage miners support five operations. Among these, $\mathsf{Upload}$, $\mathsf{Delete}$, $\mathsf{Retrieve_{S}}$, and $\mathsf{Retrieve_{M}}$ directly handle client requests. The $\mathsf{StateReplication}$ operation is used by MPIR-DSN to ensure consistent database states across miners in a subnet.
    \begin{itemize}
	\item $\mathsf{Upload}(F,\mathsf{Aca}) \rightarrow \mathsf{index}, \pi^{u}$: A storage miner stores file $F$, establishes a mapping in $\mathsf{Aca}$ and generate an upload proof $\pi^u$.
        \item $\mathsf{Delete}(\mathsf{FID},\mathsf{Aca}) \rightarrow \mathsf{index}, \pi^{d}$: A storage miner deletes the file identified by $\mathsf{FID}$, removes its mapping from $\mathsf{Aca}$ and generate a deletion proof $\pi^d$.
	\item $\mathsf{Retrieve_{S}}(q) \rightarrow \mathsf{ans}$: The storage miner generates a query answer $\mathsf{ans}$ for query vector $q$ using single-server PIR.
        \item $\mathsf{Retrieve_{M}}(q) \rightarrow \mathsf{ans}$: A storage miner generates a query answer $\mathsf{ans}$ for query vector $q$ using multi-server PIR. 
        \item $\mathsf{StateReplication}(\mathsf{Aca})$: This operation allows a miner in a subnet to reach consensus on the order of file upload and deletion requests and update its local $\mathsf{Aca}$ in the predetermined order.
    \end{itemize}

\begin{figure}[!th] 
	\centering 
	\includegraphics[width=0.48\textwidth]{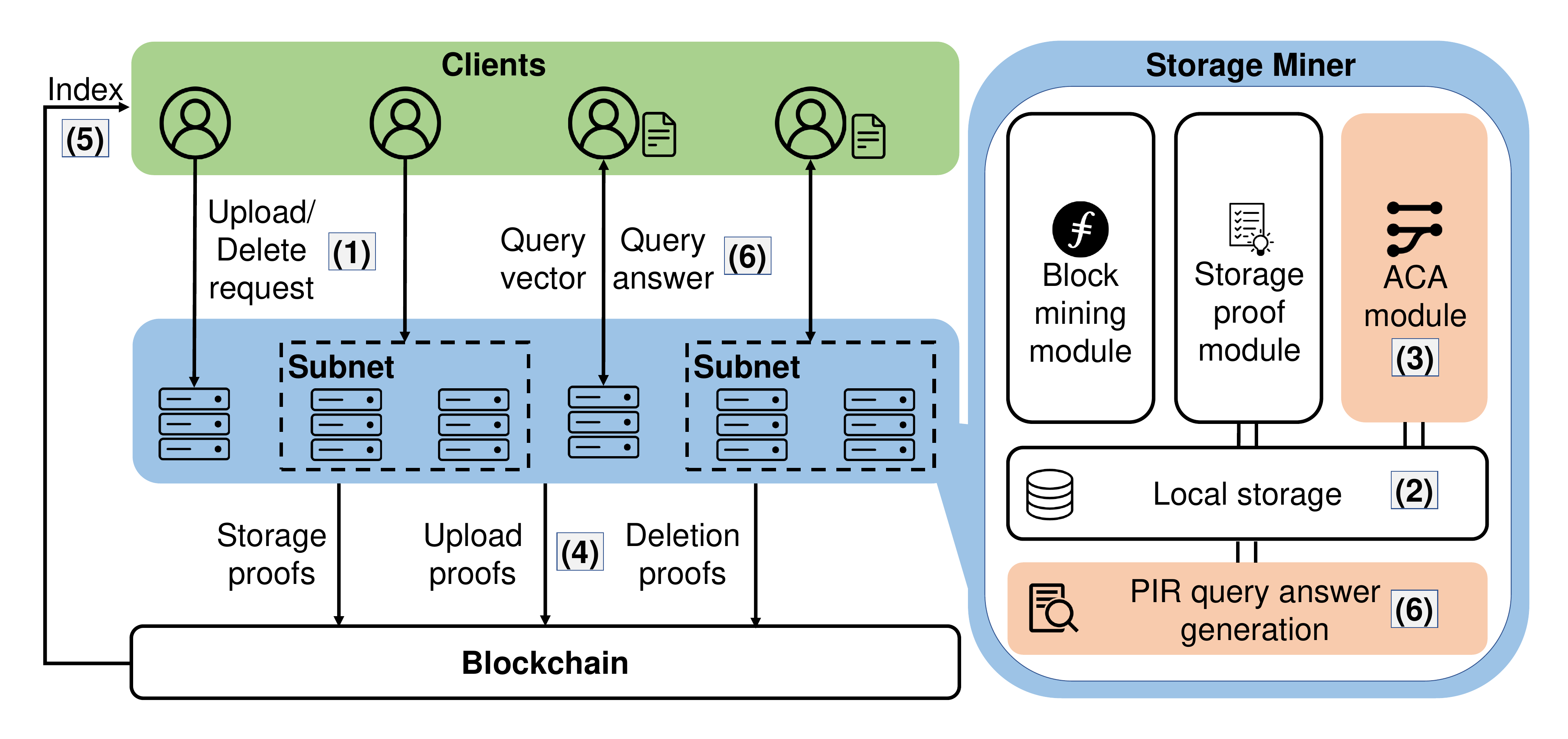} 
	\caption{The system architecture of PIR-DSN} 
	\label{system architecture}
\end{figure} 

%%%%%%%Algorithm1
\begin{algorithm}[!htb]
	\DontPrintSemicolon
        % \LinesNumbered
	\caption{Single-Miner Private Information Retrieval Based DSN (SPIR-DSN)}
	\label{alg:SPIR}
	// \textcolor{blue}{Client side}\\      
	$\triangleright$ \textcolor{blue}{$\mathsf{Upload}(F,\vec{SM}) \rightarrow \mathsf{FID}$}\;
        Send $F$ to all miners in $\vec{SM}$ and get the $\mathsf{FID}$\\

        $\triangleright$ \textcolor{blue}{$\mathsf{Delete}(\mathsf{FID},\vec{SM}) \rightarrow \{0,1\}$}\;
        Send the $\mathsf{FID}$ to all miners in $\vec{SM}$\\

        $\triangleright$ \textcolor{blue}{$\mathsf{Retrieve}_{\mathsf{S}}(\mathsf{FID},SM) \rightarrow \{ F, \perp \}$}\;
        Get the $\mathsf{index}$ of $\mathsf{FID}$ and the database size $sz$ of $SM$ from the blockchain\\
        $pk, sk$ $\leftarrow$ $\mathsf{Gen}(1^\lambda)$\\
        \If{the mapping of $\mathsf{FID}$ is not deleted by $SM$} {
            $q$ $\leftarrow$ $\mathsf{PIR_{S}.Query}(1^\lambda, pk, \mathsf{index}, sz)$\\
            Send $q$ to $SM$, and wait for the query answer $\mathsf{ans}$\\
            $F$ $\leftarrow$ $\mathsf{PIR_{S}.Decrypt}(sk, \mathsf{ans})$\\
            Check the integrity of $F$ using $\mathsf{FID}$\\        
        } \Else{
            Find another storage miner to retrieve the file\\
        }

        // \textcolor{purple}{Storage miner side}\\
	$\triangleright$ \textcolor{purple}{$\mathsf{Upload}(F,\mathsf{Aca}) \rightarrow \mathsf{index}, \pi^{u}$}\;
        Store $F$ locally and get its $\mathsf{FID}$\\
        Find $r_i$ which is either equal to $\perp$ or contain empty leaves\\  
        \If{no $r_i$ is found} {
            $\mathsf{Aca}, \Vec{w}$ $\leftarrow$ $\mathsf{ACA.Insert}(\mathsf{Aca},\mathsf{FID}, \perp)$\\
        } \Else{
            \If{$r_i=\perp$} {
                Set $\mathsf{leaf}_\perp$ as the leftmost leaf of $r_i$\\
            } \Else{
                Conduct pre-order DFS to find first empty leaf $\mathsf{leaf}_\perp$\\ 
            }
            $\mathsf{Aca}, \Vec{w}$ $\leftarrow$ $\mathsf{ACA.Insert}(\mathsf{Aca},\mathsf{FID}, \mathsf{leaf}_\perp)$\\
        }
        Get the root of Merkle tree $r_k$ containing the $\mathsf{FID}$\\
        Calculate the $\mathsf{index}$ of $\mathsf{FID}$ using $\mathsf{Aca}.\vec{v}$, $r_k$ and $\vec{w}$\\
        Get the hash $h$ of the previous proof from blockchain\\
	  $\pi^{u}$ $\leftarrow$ $\langle \mathsf{ACA}.\vec{v}, r_k, \vec{w}, \mathsf{FID}, \mathsf{index}, h \rangle$\\

        $\triangleright$ \textcolor{purple}{$\mathsf{Delete}(\mathsf{FID}, \mathsf{Aca}) \rightarrow \mathsf{index}, \pi^{d}$}\;
        Find the Merkle tree $r_k$ contains the $\mathsf{FID}$ and get its $\mathsf{index}$\\
        $\mathsf{Aca}, \Vec{w}$ $\leftarrow$ $\mathsf{ACA.Delete}(\mathsf{Aca}, \mathsf{FID})$\\
        \If{there is no other FID in $r_k$} {
            $r_k=\perp$\\
        }
        Get the hash $h$ of the previous proof from blockchain\\
	  $\pi^{d}$ $\leftarrow$ $\langle \mathsf{Aca}.\vec{v}, r_k, \vec{w}, \mathsf{FID}, h \rangle$\\
        Delete the corresponding file locally\\

        $\triangleright$ \textcolor{purple}{$\mathsf{Retrieve}_{\mathsf{S}}(q) \rightarrow \mathsf{ans}$}\;
        Construct $DB$ by taking the file corresponding to the FID with index $i$ as the $i$-th entry and filling other empty entries with blank files\\ 
        $\mathsf{ans}$ $\leftarrow$ $\mathsf{PIR_{S}.Answer}(q,DB)$\\
\end{algorithm} 

%%%%%%%%%%Algorithm2
\begin{algorithm}[!htb]
	\DontPrintSemicolon
        % \LinesNumbered
	\caption{Multi-Miner Private Information Retrieval Based DSN (MPIR-DSN)}
	\label{alg:MPIR}
        // \textcolor{blue}{Client side}\\      
	$\triangleright$ \textcolor{blue}{$\mathsf{Upload}(F,\vec{SM}) \rightarrow \mathsf{FID}$}\;
        Send $F$ to all miners in a subnet $\vec{SM}$ and get the $\mathsf{FID}$\\

        $\triangleright$ \textcolor{blue}{$\mathsf{Delete}(\mathsf{FID},\vec{SM}) \rightarrow \{0,1\}$}\;
        Send the $\mathsf{FID}$ to all miners in a subnet $\vec{SM}$\\

        $\triangleright$ \textcolor{blue}{$\mathsf{Retrieve_{M}}(\mathsf{FID},\vec{SM}) \rightarrow F, \vec{SM_m}$}\;
        Get the $\mathsf{index}$ of $\mathsf{FID}$ and the database size $sz$ of $\vec{SM}$\\
        $\vec{q}$ $\leftarrow$ $\mathsf{PIR_{M}.Query}(1^\lambda, \mathsf{index}, sz, N)$\\
        Send query vector to each miner in a subnet $\vec{SM}$, wait for their query answers $\vec{\mathsf{ans}}$\\
        $F$ $\leftarrow$ $\mathsf{PIR_{M}.Reconstruct}(\vec{\mathsf{ans}})$\\
        Check the integrity of $F$ using $\mathsf{FID}$\\
        Add unresponsive and faulty miners into $\vec{SM_m}$\\
        
        // \textcolor{purple}{Storage miner side}\\
        $\triangleright$ \textcolor{purple}{$\mathsf{Upload}(F,\mathsf{Aca}) \rightarrow \mathsf{index}, \pi^{u}$}\;
        Store $F$ locally and get its $\mathsf{FID}$\\
        Pend the operation of inserting the $\mathsf{FID}$ into $\mathsf{Aca}$\\
        Wait for the upload proof $\pi^{u}$ of $\mathsf{FID}$ stored on-chain\\
        
        $\triangleright$ \textcolor{purple}{$\mathsf{Delete}(\mathsf{FID},\mathsf{Aca}) \rightarrow \mathsf{index}, \pi^{d}$}\;
        Pend the operation of deleting the $\mathsf{FID}$ from $\mathsf{Aca}$\\
        Wait for the deletion proof $\pi^{d}$ of $\mathsf{FID}$ stored on-chain\\
        Delete the corresponding file locally\\

        $\triangleright$ \textcolor{purple}{$\mathsf{StateReplication}(\mathsf{Aca})$}\;
        \If{the miner is a primary miner} {
            Choose operations in pending and order them\\
            Update $\mathsf{Aca}$ in the order and generate proofs\\
            Package proofs in a block and broadcast the block\\
        } \Else{
            Wait for the block sent by the primary miner\\
            Verify the correctness of proofs and check whether the corresponding operations are pending\\
            \If{the block pass the above verification} {
                 Broadcast a prepare message\\
            }
        }
        \If{receive at least $2N/3$ prepare message} {
            Broadcast a commit message\\
        }
        \If{receive at least $2N/3$ commit message} {
            Finalize the block\\
            Update $\mathsf{Aca}$ if the miner is not a primary miner\\
        }
        Remove the operations included in the block\\
        \If{the above processes can't complete in time} {
            Use view change protocol to elect a new primary miner\\
        }
        $\triangleright$ \textcolor{purple}{$\mathsf{Retrieve_{M}}(q) \rightarrow \mathsf{ans}$}\;
        Construct $DB$ using the same method as in SPIR-DSN\\
        $\mathsf{ans}$ $\leftarrow$ $\mathsf{PIR_{M}.Answer}(q,DB)$\\
\end{algorithm}

%%%% %Additionally, MPIR-DSN and SPIR-DSN exceed Filecoin by approximately 4.51$\times$ and 13.16$\times$ as the communication cost of MPIR-DSN is about 4 times and SPIR-DSN about 5.7 times than Filecoin.